\documentclass{lmcs} %%% last changed 2014-08-20

%% mandatory lists of keywords 
\keywords{Denotational semantics, embedded languages, enriched categories}

%% read in additional TeX-packages or personal macros here:
%% e.g. \usepackage{tikz}
\usepackage{hyperref}
\usepackage{mathrsfs}
\usepackage{soul}
%%\input{myMacros.tex}
%% define non-standard environments BEYOND the ones already supplied 
%% here, for example
\theoremstyle{plain} %\crefname{satz}{Satz}{S\"atze}
%% Do NOT replace the proclamation environments lready provided by
%% your own.
%%%%%%%%%%PACCHETTI E MACRO NOSTRI%%%%%%%%%%
\usepackage{float}
\usepackage{graphicx}
\usepackage[all,2cell]{xy}
\UseAllTwocells
\xyoption{v2}
\NoResizing
\UseResizing
\usepackage{bussproofs}

%%%%%%%ALTRE MACRO%%%%%%%%%%%%%%%%%%%

\newcommand{\red}{\textcolor{red}}

\newcommand{\blue}{\textcolor{blue}}
\newcommand{\commento}[1]{}

\newcommand{\ewire}{\textsf{EWire}}
\newcommand{\qwire}{\textsf{QWire}}

\newcommand{\host}{\mathsf{H}}
\newcommand{\core}{\mathsf{C}}
\newtheorem{remark}{Remark}
\newtheorem{lemma}{Lemma}
\newcommand{\tenscont}[1]{\times_{#1}}
\newcommand{\ptype}{\mathtt{Proof}}
\def\REL{\mathbf{Rel}}
\def\DCPO{\mathbf{Dcpo}}

\newcommand{\lang}[1]{\mathscr{L}_{#1}}
\newcommand{\ctypeset}{\mathscr{T}_{C}}
\newcommand{\htypeset}{\mathscr{T}_{H}}

\newcommand{\boldbit}{\mathbf{Bit}}

\newcommand{\boldbool}{\mathsf{bool}}

\newenvironment{scprooftree}[1]%
  {\gdef\scalefactor{#1}\begin{center}\proofSkipAmount \leavevmode}%
    {\scalebox{\scalefactor}{\DisplayProof}\proofSkipAmount \end{center} }
  \newcommand{\enrichedparallel}[2]{\mathbf{parallel}(#1,#2)}

\newcommand{\tenscontlin}[1]{\otimes_{#1}}
\newcommand{\LNL}{\texttt{LNL}}

%\include{macros2}
%%%%%%%%%%%%%%%%%%%%%%%%%%%%%%%%%%%%%
%% due to the dependence on amsart.cls, \begin{document} has to occur
%% BEFORE the title and author information:

\begin{document}

\newcommand{\Tadj}[2]{#1^{#2}}
\newcommand{\psalg}[1]{\operatorname{\mathbf{Ps}-#1-\mathbf{Alg}}}
\newcommand{\alglax}[1]{\operatorname{#1\mbox{-}\mathbf{Alg}_l}}
\newcommand{\distlaw}[1]{\operatorname{\mathbf{Dist}_{#1}}}
\newcommand{\pseudodistlaw}[1]{\operatorname{\mathbf{Ps-Dist}_{#1}}}
\newcommand{\lifting}[1]{\operatorname{\mathbf{Lift}}_{\operatorname{#1-\mathbf{Alg}}}}
\newcommand{\pseudolifting}[1]{\operatorname{\mathbf{Lift}}_{\operatorname{\mathbf{Ps}-#1-\mathbf{Alg}}}}
%%%%%%%%%%%%%%%%%Logica Categoriale%%%%%%%%%%%%%%%%%%%%%%%%%%%%%%
\newcommand{\bemph}[1]{\textbf{\emph{#1}}}
\newcommand{\theory}[1]{\mathsf{#1}}
\newcommand{\algtheory}[1]{\mathcal{#1}\mbox{-algebra}}
\newcommand{\context}[3]{[#1_{1}:#2_{1}, \dots ,#1_{#3}:#2_{#3}]}
\newcommand{\termincontext}[3]{#1 : #2\;\; [ #3 ]}
\newcommand{\interptermincontext}[3]{\interp{#1 : #2\;\; [ #3 ]}}
\newcommand{\abbinterptermincontext}[2]{\interp{#1\; [ #2 ]}}
\newcommand{\functionsymbol}[4]{\freccia{#1_1,\dots , #1_{#4}}{#2}{#3}}
\newcommand{\relationsymbol}[3]{#1 \rightarrowtail #2_1,\dots , #2_{#3}}
\newcommand{\interpfunctionsymbol}[4]{\freccia{\interp{#1_1}\times \dots \times \interp{ #1_{#4}}}{\interp{#2}}{\interp{#3}}}
\newcommand{\interpl}{\ensuremath{ [ \hspace{-0.45ex} |}} 
\newcommand{\interpr}{  \ensuremath{ | \hspace{-0.45ex}]}} 
\newcommand{\interp}[1]{\interpl #1 \interpr}
\newcommand{\substitution}[3]{#1[#2/#3]}
\newcommand{\propincontext}[2]{#1 \; prop\; [ #2 ]}
\newcommand{\formulaincontext}[2]{#1 \; [ #2 ]}
\newcommand{\sequentincontext}[3]{#1\vdash #2 \;[#3]}
\newcommand{\sequent}[2]{#1\vdash #2}

\newcommand{\contest}[1]{\vec{#1}}
\newcommand{\Strutture}[1]{\Sigma 	\text{-}\mathbf{Str}(#1)}
\newcommand{\Tmodel}[1]{\cT \text{-}\mathbf{Mod}(#1)}

%%%%% Accenti e overline%%%%%%%%%%%%%%%%%%%%%%%
\newcommand{\ltil}[1]{\widetilde{#1}}
\newcommand{\wht}[1]{\widehat{#1}}
\newcommand{\ovln}[1]{\overline{#1}}
\newcommand{\ovarrow}[1]{\overrightarrow{#1}}
\newcommand{\uln}[1]{\underline{#1}}
%%%%%%%%%%%%%%%%%%%%%%%%%%%%%%%%%%%%%%%%%%%%%%%
\newcommand{\quotientcategory}[1]{{\mQ}_{#1}}                        %categoria di base completata quozienti compeltamento per quozienti
\newcommand{\quotientdoctrine}[1]{#1_q}                          %dotrina completata per quozienti
\newcommand{\quotientcompletion}[1]{\doctrine{\mQ_{#1}}{#1_q}}   %completamento per quozienti

\newcommand{\comprehensioncompletion}[1]{\doctrine{\mG_{#1}}{#1_c}}%completamento per comprensioni
\newcommand{\comprehensioncategory}[1]{{\mG}_{#1}}                    %categoria di base completata comp   
\newcommand{\comprehensiondoctrine}[1]{#1_c}                    %dottrina completata per comprensioni   
\newcommand{\compdiagcompletion}[1]{\doctrine{\mX_{#1}}{{#1}_d}} %completamento diag comprensivi
\newcommand{\compdiagdoctrine}[1]{#1_d}                     %dotrina completata diag comp
\newcommand{\compdiagcategory}[1]{\mX_{#1}}           %categoria di base completata diag comp

\newcommand{\Ef}[1]{\mathbf{Ef}_{#1}}                %categoria regolare costruita da tottrina m-variazionale esistenziale
\newcommand{\compmvardoctrine}[1]{(#1)_{cd}}        %m-variational completion doctrine
\newcommand{\compmvarquotinetdoctrine}[1]{(#1)_{cqd}} 
\newcommand{\tripostotopos}[1]{\mathcal{T}_{#1}}
%%%%%%%%%%%%%%%%%%%%%%%%%%%%%%%%%%%%%%%%%%%%%%%%%%%%%%%%%%%%%%%%%%%%%%%%%%
\newcommand{\compex}[1]{{#1}^{\existential}}
\newcommand{\compel}[1]{{#1}^{\elementary}}
\newcommand{\angbr}[2]{\langle #1,#2 \rangle} 
\newcommand{\comprl}{\ensuremath{ \{ \hspace{-0.6ex} |}} 
\newcommand{\comprr}{  \ensuremath{ | \hspace{-0.6ex}\}}} 
\newcommand{\comp}[1]{\comprl #1 \comprr}
\newcommand{\freccia}[3]{\xymatrix{#2 \colon #1  \ar[r] &  #3}}
\newcommand{\cover}[3]{\xymatrix{#2 \colon #1  \ar@{-|>}[r] &  #3}}
\newcommand{\frecciasopra}[3]{\xymatrix{ #1  \ar[r]^{#2} &  #3}}
\newcommand{\frecciasopralunga}[3]{\xymatrix{ #1  \ar[rr]^{#2} &&  #3}}
\newcommand{\pbmorph}[2]{#1^{\ast}#2} 
\newcommand{\duefreccia}[3]{\xymatrix@C=0.5cm{#2 \colon #1  \ar@{=>}[r] &  #3}}
\newcommand{\modificazione}[3]{\xymatrix@C=0.5cm{#2 \colon #1  \ar@{~>}[r] &  #3}}
\newcommand{\doctrine}[2]{\xymatrix{#2 \colon #1^{\op}  \ar[r] & \infsl }}
\newcommand{\hyperdoctrine}[2]{\xymatrix{#2 \colon #1^{op}  \ar[r] & \heyalg }}
\newcommand{\equivalence}[3]{#2 \colon #1 \equiv #3}
\newcommand{\duemorfismo}[6]{\xymatrix{
#1^{\op} \ar[rrd]^#2_{}="a" \ar[dd]_{#3^{\op}}\\
&& \infsl\\
#5^{\op}  \ar[rru]_#6^{}="b"
\ar_#4  "a";"b"}}
\newcommand{\comsquare}[8]{ \xymatrix@+1pc{ 
#1 \ar[r]^{#5} \ar[d]_{#6} & #2 \ar[d]^{#7} \\
#3 \ar[r]_{#8} & #4 
}}
\newcommand{\pullback}[8]{ \xymatrix@+1pc{ 
#1 \pullbackcorner \ar[r]^{#5} \ar[d]_{#6} & #2 \ar[d]^{#7} \\
#3 \ar[r]_{#8} & #4 
}}
\newcommand{\quadratocomm}[8]{ \xymatrix@+1pc{ 
#1 \ar[r]^{#5} \ar[d]_{#6} & #2 \ar[d]^{#7} \\
#3 \ar[r]_{#8} & #4 
}}
\newcommand{\comsquarelargo}[8]{ \xymatrix@+1pc{ 
#1 \ar[rr]^{#5} \ar[d]_{#6} && #2 \ar[d]^{#7} \\
#3 \ar[rr]_{#8} && #4 
}}
\newcommand{\parallelmorphisms}[4]{\xymatrix@+1pc{
#1 \ar @<+4pt>[r]^{#2} \ar @<-4pt>[r]_{#3} & #4
}}
\newcommand{\relation}[4]{\xymatrix@+1pc{
\angbr{#2}{#3}\colon #1 \ar @<+4pt>[r] \ar @<-4pt>[r] & #4
}}
\newcommand{\frecceparalleleopposte}[4]{\xymatrix@+1pc{
#1 \ar@<+4pt>[r]^{#2} \ar@<-4pt>@{<-}[r]_{#3} & #4
}}
\newcommand{\equalizer}[6]{\xymatrix@+1pc{
#1 \ar[r]^{#2} & #3 \ar @<+4pt>[r]^{#4} \ar @<-4pt>[r]_{#5} & #6
}}
\newcommand{\coequalizer}[6]{\xymatrix@+1pc{
 #1 \ar @<+4pt>[r]^{#2} \ar @<-4pt>[r]_{#3} & #4 \ar[r]^{#5} & #6
}}
\newcommand{\sottoggetto}[2]{\xymatrix{
#1 \ar@{>->}[r] & #2
}}
\newcommand{\subobject}[3]{\xymatrix{
#1 \ar@{>->}[r]^{#2} & #3
}}
\newcommand{\spanbelongsrelation}[3]{#2\left(#3\right)#1}
\newcommand{\RFinverso}[1]{R_{\langle F(\pr_1),F(\pr_2)\rangle^{-1}}(#1)}
\newcommand{\RFnormale}[1]{R_{\langle F(\pr_1),F(\pr_2)\rangle}(#1)}
\newcommand{\RGinverso}[1]{R_{\langle G(\pr_1),G(\pr_2)\rangle^{-1}}(#1)}
\newcommand{\pullbackcorner}[1][ul]{\save*!/#1+1.2pc/#1:(1,-1)@^{|-}\restore}
%%%%%%%%%%%%%%%%%%%%%%%%%%%%%%%%%%%%%%%%%%%%%%%%%%%
\def\Heyt{\operatorname{\mathbf{Heyt}}}
\def\Gp{\operatorname{\mathbf{Gp}}}
\def\Grp{\operatorname{\mathbf{Grp}}}
\def\Ab{\operatorname{\mathbf{Ab}}}
\def\Hilb{\operatorname{\mathbf{Hilb}}}
\def\Rel{\operatorname{\mathbf{Rel}}}

\def\Mon{\operatorname{\mathbf{Mon}}}
\def\Top{\operatorname{\mathbf{Top}}}
\def\Ord{\operatorname{\mathbf{Ord}}}

\def\Fp{\operatorname{\mathbf{Fp}}}

\def\FinComp{\operatorname{\mathbf{FinComp}}}
\def\ED{\operatorname{\mathbf{ExD}}}            %dot. esistenziali
\def\ElD{\operatorname{\mathbf{ElD}}}         %dot. elementari
\def\PD{\operatorname{\mathbf{PD}}}             %dot. primarie
\def\PDD{\operatorname{\mathbf{PdD}}}           %dot. primarie con base discreta
\def\EDD{\operatorname{\mathbf{EdD}}}        %dot. esistenziali con base discreta 
\def\CEED{\operatorname{\mathbf{Ex-mVar}}}       %dottrine esistenziali m-variazionali
\def\PED{\operatorname{\mathbf{ExD}}}   %dottrine esistenziali
\def\Cat{\operatorname{\mathbf{Cat}}}
\def\EED{\operatorname{\mathbf{EED}}}           %dottrine elementari esistenziali
\def\SD{\operatorname{\mathbf{Ex-mVar}}}           %dottrine esistenziali m-variazionali
\def\mVar{\operatorname{\mathbf{mVar}}}           %dottrine esistenziali m-variazionali
\def\CED{\operatorname{\mathbf{CED}}}         %dot. elementari con diagonali comprensioni
\def\CE{\operatorname{\mathbf{CE}}}           %dot. elementari con comprensioni
\def\Reg{\operatorname{\mathbf{Reg}}}
\def\QD{\operatorname{\mathbf{QD}}}           %dot. m-variazionali con quozienti stabili ed effettivi
\def\QED{\operatorname{\mathbf{QED}}}          %categoria dottrine elementari con quozienti
\def\LFS{\operatorname{\mathbf{LFS}}}          %categoria sistemi di fattorizzazione stabili
\def\excat{\operatorname{\mathbf{Xct}}}        %categore exatte
\def\mR{\mathbb{\mathcal{R}}}
\newcommand{\rel}[1]{\mathbf{Rel}(#1)}
\newcommand{\erel}[1]{\mathbf{ERel}(#1)}
\newcommand{\subobcategory}[1]{\mathbf{Sub}(#1)}
\newcommand{\maprel}[1]{\mathbf{Map \; Rel}(#1)}
%%%%%%%%%%%%%%%%%%%%%%%% Categorie %%%%%%%%%
\def\mA{\mathcal{A}}
\def\mB{\mathcal{B}}
\def\mC{\mathcal{C}}
\def\mX{\mathcal{X}}
\def\mD{\mathcal{D}}
\def\mE{\mathcal{E}}
\def\mF{\mathcal{F}}
\def\mM{\mathcal{M}}
\def\mK{\mathcal{K}}
\def\mO{\mathcal{O}}
\def\mQ{\mathcal{Q}}
\def\mV{\mathcal{H}}
\def\terminalcat{\mathcal{I}}
%%%%%%%%%%%%%%%%%%%%%%%%%%%%%%%%%%%%%%%%%%%%

\def\mG{\mathcal{G}}
\def\cT{\mathbb{T}}    %usato per indicare teorie
\def\forg{\mathrm{U}}
\def\des{Des}         %desent data
%%%%%%%%%%%%%%%%%%% Funtori %%%%%%%%%%%%%%%%
\def\compfun{\mathrm{c}}
\def\unitfun{\mathrm{u}}
\def\funD{\mathrm{D}}
\def\funQ{\mathrm{Q}}
\def\funC{\mathrm{C}}
\def\funG{\mathrm{G}}
\def\funE{\mathrm{E}}
\def\funS{\mathrm{S}}
\def\funF{\mathrm{F}}
\def\funEl{\mathrm{El}}
\def\funU{\mathrm{U}}
\def\funT{\mathrm{T}}
\def\funK{\mathrm{K}}
\def\mT{\mathrm{T}}
\def\mS{\mathrm{S}}
\def\terminalobject{1}
%%%%%%%%%%%%%%%%%%%%%%%%%%%%%%%%%%%%%%%%%%%%%
\def\LindTar{\doctrine{\mathcal{V}}{LT}}
\def\ox{\otimes}
\newbox\erove \setbox\erove=\hbox{\reflectbox{E}}
\def\Einv{\usebox\erove}
\def\pr{\operatorname{ pr}}
\def\FV{\operatorname{ FV}}
\def\id{\operatorname{ id}}
\def\iso{\operatorname{ Iso}}
\def\op{\operatorname{ op}}
\def\cod{\operatorname{ cod}}
\def\dom{\operatorname{ dom}}
\def\im{\operatorname{ im}}
\def\mono{\operatorname{ mono}}
\def\el{\operatorname{ el}}
\def\existential{\operatorname{ ex}}
\def\elementary{\operatorname{ el}}
\def\theory{\operatorname{ \mathsf{T}}}
\def\Sub{\operatorname{ Sub}}

\def\Sh{\operatorname{ \mathbf{Shv}}}
%%%%%% T-algebra e pseudo-T-algebra%%%%%%%%%%%%%%%
\def\Talg{\operatorname{ T-\mathbf{Alg}}}
\def\Txalg{\operatorname{ T_{x}-\mathbf{Alg}}}
\def\Talglax{\operatorname{ T-\mathbf{Alg}_l}}
\def\Tcalglax{\operatorname{ T_c-\mathbf{Alg}_l}}
\def\PseudoTalg{\operatorname{ \mathbf{Ps}-T-\mathbf{Alg}}}
%%%%%%%%%%%%%%%%%%%%%%%%%%%%%%%%%%%%%%%%%%%%%%%%%%
\newcommand{\exactcomp}[1]{(#1)_{\ex / \reg}}      %exact completion of a regular cat
\newcommand{\regularcomp}[1]{(#1)_{\reg / \lex}}   %regular completion of a lex-cat
\newcommand{\exactcomplex}[1]{(#1)_{\ex / \lex}}   %exact completion of a lex-cat
\def\reg{\operatorname{reg}}
\def\ex{\operatorname{ex}}
\def\lex{\operatorname{lex}}
\def\Set{\operatorname{\mathbf{Set}}}
\def\Sh{\operatorname{\mathbf{Sh}}}
\def\yoneda{\operatorname{\mathbf{Y}}}
\def\signature{\operatorname{\mathbf{Sg}}}
\def\terms{\operatorname{\mathbf{Terms}}}
\def\variables{\operatorname{\mathbf{Var}}}
\def\:{\colon}
\def\infsl{\operatorname{\mathbf{InfSL}}}
\def\heyalg{\operatorname{\mathbf{HeyAlg}}}
\def\pos{\operatorname{\mathbf{Pos}_{\top}}}
\def\ev{\operatorname{ev}}
\def\equalsymb{\operatorename{Eq}}
\def\mfa{\mathfrak{a}}
\def\LT{\mathcal{L}}
%%%%%%%%%%%%%%%2-cell composition%%%%%%%%%%%%%%%%%%%%%%%%%%%%%%%%%%%%%
\def\parallelcomp{.}
\def\verticalcomp{\circ}
\def\2cellidentity{i}
\def\1cellidentity{1}
\def\cellcomp{\circ} 

%%%%%%%%%%%%%%%%%%%%%%%%%%%%%%%%%%%%%%%%%%%%%%%%%
\def\type{\operatorname{\mathbf{type}}}
\newcommand{\ob}[1]{\mathbf{ob}(#1)}
\newcommand{\typeconstructor}[2]{\vdash_{#1} #2 \; : \type}
\newcommand{\termconstructor}[4]{#1 \vdash_{#2} #3\;: #4}
\newcommand{\terconstrmix}[5]{#1\;|\; #2 \vdash_{#3} #4\;: #5}
\newcommand{\inr}[1]{\pi_2(#1)}
\newcommand{\inl}[1]{\pi_1(#1)}
\newcommand{\letbein}[3]{\mathbf{let} \; #1\; \mathbf{be} \; #2 \;  \mathbf{in}\; #3}
\newcommand{\Ctype}[3]{#1(#2,#3)}
\newcommand{\arrowtype}[2]{#1\rightarrow #2}
\newcommand{\oput}[1]{\mathbf{output}(#1)}
\newcommand{\derelict}[1]{\mathbf{derelict}(#1)}
\newcommand{\promote}[1]{\mathbf{promote}(#1)}
\newcommand{\enrichedtheory}[2]{\mathcal{T}(#1,#2)}
\newcommand{\Carrowtype}[2]{\xymatrix@-1pc{#1 \ar@{-o}[r] & #2}}
\newcommand{\Vcat}[1]{#1 \mbox{-} \Cat}
\newcommand{\enrichedcomp}[2]{\mathbf{comp}(#1,#2)}
\def\VCcalcolo{\operatorname{\mathsf{H\hspace{-0.28ex}C}}}
\def\Smc{\operatorname{\mathbf{SMC}}}
\def\Pll{\operatorname{\mathbf{PLL}}}
\def\cstar{\operatorname{C^{\ast}}}
\def\wstar{\operatorname{W^{\ast}}}
\def\fdcstar{\operatorname{\mathbf{FdC^{\ast}-Alg}}}
\def\fdwstar{\operatorname{\mathbf{FdW^{\ast}-Alg}}}
\def\dcpo{\operatorname{\mathbf{Dcpo_{\bot}}}}
\def\catvcalcolo{\mathrm{Th}(\VCcalcolo)}
\newcommand{\vcat}[1]{#1\mbox{-}\Cat}
\newcommand{\modelcat}[1]{\mathrm{Model}(#1)}

%\title[Compositional theories for embedded languages]{Compositional theories for embedded \blue{??host-core???} languages} 

\title[Compositional Theories for Host-Core Languages]{Compositional Theories for Host-Core Languages}
%\titlecomment{{\lsuper*}OPTIONAL comment concerning the title, \eg, 
%  if a variant or an extended abstract of the paper has appeared elsewhere.}

%\thanks{thanks 1, optional.}	%optional
\author[D. Trotta]{Davide Trotta}	%required
\address{University of Pisa}	%required
\email{}  %optional
%\thanks{thanks 1, optional.}	%optional

\author[M.~Zorzi]{Margherita Zorzi}	%optional
\address{University of Verona, Italy}	%optional
\email{margherita.zorzi@univr.it}  %optional
%\thanks{thanks 2, optional.}	%optional

%\author[C.~Name3]{Carla Name3}	%optional
%\address{address 3}	%optional
%\urladdr{name3@url3\quad\rm{(optionally, a web-page can be specified)}}  %optional
%\thanks{thanks 3, optional.}	%optional

%% etc.

%% required for running head on odd and even pages, use suitable
%% abbreviations in case of long titles and many authors:

%%%%%%%%%%%%%%%%%%%%%%%%%%%%%%%%%%%%%%%%%%%%%%%%%%%%%%%%%%%%%%%%%%%%%%%%%%%

%% the abstract has to PRECEDE the command \maketitle:
%% be sure not to issue the \maketitle command twice!

\begin{abstract}

 Linear type theories, of various types and kinds, are of fundamental importance in most programming language research nowadays. In this paper we describe an extension of Benton's Linear-Non-Linear type theory and model for which we can prove some extra properties that make the system better behaved as far as its theory is concerned. We call this system the host-core type theory. The syntax of a host-core language is {split into} two parts, representing respectively a host language $\mathsf{H}$ and a core language $\mathsf{C}$, embedded in $\mathsf{H}$. This idea, derived from  Benton’s Linear-Non-Linear formulation of Linear Logic, allows a flexible management of data linearity, which is particularly useful in non-classical computational paradigms. The host-core style can be viewed as a simplified notion of multi-language programming, the process of software development in a heterogeneous programming language. In this paper, we present the typed calculus $\VCcalcolo$, a minimal and flexible host-core system that captures and standardizes common properties of an ideal class of host-core languages. We provide a denotational model in terms of enriched categories and we state a strong correspondence between syntax and semantics through the notion of internal language. The latter result provides some useful characterizations of host-core style, otherwise difficult to obtain. We also discuss some concrete instances, extensions and specializations of the system $\VCcalcolo$.

\end{abstract}

\maketitle

%% start the paper here:
\section{Introduction}

{
The idea of dividing the syntax of a formal system into two communicating parts was developed both in logic and in computer science. In ~\cite{AMLNLLPTM} Benton introduced \LNL\ (Linear-Non-Linear Logic), a presentation of Linear Logic where the bang (!) modality is decomposed into an adjunction  between a symmetric monoidal closed category and a cartesian closed category.
\LNL\ models are a considerable simplification of those of Intuitionistic Linear Logic and, in the following years, \LNL\ catalysed the attention of the categorical logic community. Maietti et al. in ~\cite{RCSILL} discussed models and morphisms for several versions of Linear Logic, including \LNL.  Benton’s idea enjoys an interesting interpretation also from a programming language perspective. In fact, the recent quantum computing literature shows that \LNL\ provides a good foundation for paradigmatic calculi in which a ``main language” controls the computation and delegates the management of quantum data to a linear embedded ``core”. \LNL\ has inspired interesting working quantum circuit definition languages such as $\qwire$~\cite{qwire17a,qwire17b,RobRandThesis,Paykin18} and its generalization $\mathsf{EWire}$, whose foundation and denotation have been partially studied by Staton et al. ~\cite{RS20}.

A situation where a main language interacts with others is not a peculiarity of quantum languages. In practical programming, it is frequent to use a principal encoding language and to distribute some tasks to another (often domain specific) language, ``called” or imported in terms of ready-to-use subroutines. Programming platforms and modern IDE systems provide efficient methods to manage and control the process of software development in a heterogeneous programming language environment and software developers need tools to combine different languages, sometimes called \emph{cross-language interoperability mechanisms} ~\cite{Chisnall13}. Multi-Language Programs, as named in a pioneering investigation by Matthews and Findler ~\cite{MatthewsF07}, became a flourishing research area in programming languages. Even if some interesting results have been achieved in the last decade ~\cite{OseraSZ12,PattersonPDA17,BuroM19,BuroCM20} this is still an open research field, in particular from the programming foundation perspective. 

Following the intuitions described above, we propose a typed calculus called $\VCcalcolo$, built upon two communicating languages $\mathsf{H}$ and $\mathsf{C}$. We design and study $\VCcalcolo$ following two different traditions. On the one hand, we follow the categorical logic tradition started by Benton, and for that \cite{RCSILL} is our main technical reference. On the other hand, (quantum) circuit description languages such as $\qwire$ and $\ewire$ (and also their ancestor $\mathsf{Quipper}$) are the kind of application we envisage: in particular, we adopt from  ~\cite{RS20} the use of enriched categories as our denotational model. Beyond these technical references, we take some inspiration from practical multi-language programming. To better frame our work with respect to the state of art, we identify the notion of a \emph{host-core language} as a simplification of the notion of a \emph{multi-language framework}. 

The syntax of a host-core language is split into two parts, one for the \emph{host}, an arbitrarily powerful language $\host$, and one for the \emph{core} language $\core$, which is \emph{embedded} in $\host$. Their interaction is controlled through mixed typing rules that specify how to ``promote” well-formed terms from the core language to the host language and (possibly) conversely. With respect to a more general definition of multi-language framework (coming from \cite{OseraSZ12,PattersonPDA17,MatthewsF07, BuroM19,BuroCM20}): 
\begin{enumerate}
  \item we focus our attention on exactly two languages; 
  \item we model a restricted form of interoperability, in which the host can import all core terms (and types) but the core cannot import any terms from the host. This could be seen as a limitation, but we  show how even this restricted case is challenging enough to model, and  already provides interesting results. Notice that Benton’s \LNL\ type theory, which represents our starting point, consists of the particular case of a host-core language where the core is strictly linear.
\end{enumerate}
The ultimate goal of $\VCcalcolo$ (and its future developments) is to define a completely compositional theory that allows one to reason about the host language, the core language, and their communication, pursuing  full flexibility of the modelling framework. In this first investigation we propose a study of what we call a ``kernel” type theory, common to an ideal class of host-core languages. We propose a semantics for these host-core languages in terms of \emph{enriched category theory}, obtaining an \emph{internal language theorem} (in the sense of categorical logic), which states a full correspondence between the syntax and semantics of $\VCcalcolo$ type theories.
}

 \subsection{Main Results}\label{sec:contribution}
 \commento{
\blue{In this paper we carry on the research directions started by Benton and Staton, focusing on type theories for (a definition of) host-core languages, their denotational models in terms of enriched categories and establishing a strong correspondence between syntax and semantics via the notion of \emph{internal language}. Moreover, we show the notion of internal language  answers some central questions about the theoretical foundation of (a notion of) host-core programming languages.} \blue{POSSIAMO OMETTERE QUESTO CAPPELLO}  

We list our main results.
}

\begin{description}
\item[1) A ``kernel'' type theory for host-core languages] {We describe a kernel type theory for host-core languages where the core language is linear and the host language includes the usual lambda-calculus: we define a systematised type theory, which formally defines this notion of host-core languages. We address the problem from a general  algebraic perspective and we present a minimal system called $\VCcalcolo$, a simply-typed lambda calculus hosting a purely linear core language. Our starting point is a simple language but our goal is not only to achieve a Linear-Non-Linear system, but also to be able to easily extend both  the host and  the core fragments of the system. We  choose to start from a linear core, since, as we will show, we can obtain a non-linear core as a particular instance. To this end we exploit the $\VCcalcolo$ compositionality.
The design of the $\VCcalcolo$ type system “privileges” the host language $\host$, {establishing} a hierarchical dependency of the core fragment on the control fragment: the embedded language $\core$ is fully described in the host $\host$.} 
%
%\blue{QUI HO RIFORMULATO E IPER-SEMPLIFICATO. CONTROLLARE ALLA FINE}
 {%Then, we show how to extend the core language $\core$ to a non-linear one. 
 %To this end, we exploit the $\VCcalcolo$ compositionality. 
 We will provide a running example that explains the design choices we made, showing $\VCcalcolo$ instances and extensions.
}  
 %Differently from other proposals, we show how we can easily break the dichotomy LNL (inherited from Benton's tradition) allowing to describe an \emph{arbitrary core} into an \emph{arbitrary host}, preserving the correctness of the system.  To this end, we exploit $\VCcalcolo$ compositionality, providing an example of extension of the type theory and the corresponding semantics denotation.
% We also provide a  running example that explains the design choices we made and shows $\VCcalcolo$ instances and extensions.

\item[2) Denotation of  host-core calculi] 
{We provide a semantics for $\VCcalcolo$ in terms of enriched categories, partially following the blueprint of ~\cite{RennelaS19}. The compositionality we claim for the $\VCcalcolo$ syntax is mirrored at the semantic level. The models of $\VCcalcolo$ are pairs $(\mV,\mC)$ where $\mV$ is a cartesian closed category and $\mC$ is a $\mV$-symmetric monoidal category. In particular, we show how the presentation of the calculus admits a natural correspondence between syntactic properties and semantics constraints: $\beta$ and $\eta$ rules become coherence conditions on the morphisms of the algebraic structures. Ideally, we want to pursue a full correspondence between syntax and $\VCcalcolo$ models in order to obtain more refined type theories by simply adding syntactic rules and (equivalent) denotational properties, without changing the rules of the basic language $\VCcalcolo$.
}

 \item[3) Internal language for host-core calculi] 
 
 {We study the relationship between host-core languages and their models, using $\VCcalcolo$ as a case study. {We point out that the expected soundness and completeness theorems may be not enough to fully identify the most appropriate class of models for a type theory}, as explained in the work~\cite{RCSILL}. Moving beyond soundness and completeness, we relate host-core languages and their categorical models via the notion of \emph{internal language}. One can find several examples and applications of the notion of internal language in the categorical logic literature,  e.g. see ~\cite{IHOCL,CLP,SAE}. 
}{
By definition, $\VCcalcolo$ provides an internal language of the category 
$\modelcat{\VCcalcolo}$ of its models if one proves, as we do in Section \ref{sec:semantics}, an equivalence between $\modelcat{\VCcalcolo}$ and the category $\catvcalcolo$ of the theories of the language. We remark that in this paper we are working in the context of enriched categories and the definition of a suitable notion of morphisms for the category of models $\modelcat{\VCcalcolo}$ is the interesting and challenging point of the proof. This requires the technical notion of ``change of base'' which formally describes how, changing the host language, one can induce a change of the embedding for a given core language. After we show the correspondence between syntax and semantics via an internal language theorem, we provide some concrete examples of models.
}

\item[4) A bridge towards host-core programming theory] 
  {
 %\red{We propose a foundational investigation by focusing on linear/non-linear type theory and categorical semantics.} 
 In the last part of the paper we show how the results we proved can be used to provide an initial but significant formal definition of host-core languages. In the last few years  the foundations of multi-language programming theory have been much discussed~\cite{OseraSZ12,PattersonPDA17,BuroM19,BuroCM20}, however a key question remains  unexplored: is there a formal characterization of a  programming theory, as a type theory and a denotational semantics, where different heterogeneous languages are used and allowed to communicate with each other?
}

{
Focusing on the “host-core” languages as a particular simplified case of the most general notion of multi-languages, we answer three instances of the generic questions below. Consider two programming languages $\lang{1}$ and $\lang{2}$,  supplied with their own syntax, semantics and mathematical properties.
\begin{description}
\item[Q1] Is there a language $\lang{3}$ built out from $\lang{1}$ and $\lang{2}$, that describes both $\lang{1}$ and $\lang{2}$ and their communication allowing for the import and export of terms from one syntax into the other? If so, what is $\lang{3}$'s formal semantics? 
\item[Q2] If $\lang{1}$ is a host-core language, can we compare $\lang{1}$ with another standalone language ${\lang{2}}$?

\item[Q3] Does the internal language result represent a useful tool for designing new host-core languages?
 \end{description}
The above questions are particularly relevant for host-core languages and are non-trivial, since $\lang{1}$ and $\lang{2}$ could be radically different. Moreover, we do not consider mere extensions of a host language. We try to understand when a given language is expressive enough to host another given language, to delegate specific computations to it, and to represent its programs at the host syntax level.
}
%
%\red{Nota su Q2: indebolirei questa domanda, secondo quanto fatto con SyntaxGen. Che ne dici di qualcosa tipo Q2: Given $\lang{1}$ and $\lang{2}$ specification,  there exists a method which guides the design of a type theory for a host-core language $\lang{3}$ such that  $\lang{1}$ hosts $\lang{2}$?}

%\blue{RISCRITTO QUANTO SEGUE  togliendo la giustapposizione, LEGGERE  CON ATTENZIONE,}
%Thus,  a statement like ``\emph{$\lang{1}$ hosts $\lang{2}$}'' (which encodes question Q1) is hard to formally define syntactically. In fact, one should say that there exists a language $\lang{3}$ built by mixing  $\lang{1}$ and  $\lang{2}$ syntaxes and type theories where $\lang{1}$ hosts a core whose ``pure'' part is  $\lang{2}$. 
%\blue{FINE PARTE}

As shown in Section ~\ref{sec:travail}, we achieve this “semantically”. Thanks to the existence of the denotational model and the strong correspondence with the syntax, we can answer the question \textbf{Q1}, providing a definition of the host-core language $\lang{3}$. As far as  \textbf{Q2} is concerned, notice that the problem of formally comparing a standalone language with a host-core language is a very tricky task. Once again, the internal language theorem gives a preliminary solution.
 The correspondence provides an initial answer also to the question \textbf{Q3}. 
\end{description}

\bemph{Synopsis.}
This paper is organized as follows. In Section~\ref{sec:calculus} we present the syntax, type system and evaluation rules of the calculus $\VCcalcolo$, as well as a running example. In Section ~\ref{sec:semantics} we describe the categorical models of $\VCcalcolo$ and we prove the correspondence between syntax and denotation via the notion of internal language. In Section ~\ref{sec:travail} we use the notion of internal language to answer the questions above. In Section ~\ref{sec:related} we recall the state of the art on multi-language frameworks; discussions, conclusions and future work are in Section ~\ref{sec:conclusions}. Finally, in Appendix~\ref{app:evaluation rules} we show the evaluation rules of the host languages $\host$ and in Appendix ~\ref{sec:appedix} we recall some background notions about enriched categories.

\section{A ``kernel'' host-core calculus}\label{sec:calculus}

%\blue{Seguendo il revisore, fissiamo la terminologia. Usiamo sempre ``host-core langages/programming'' quando parliamo dei nostri sistemi, anche per individuare quello che facciamo: un caso particolare e approfondito di un paradigma pi\`u generale. Lo diremo nell'intro e nella lettera ai revisori}.
In this section we present the typed calculus $\VCcalcolo$,  %We have two kinds of different types, terms and judgements: 
%Moreover we present a categorical semantic using enriched categories, and we show using the notion internal language generalized in the enriched case, that the category of theory over the language $\VCcalcolo$ is equivalent to the category whose objects are enriched categories with some properties.
%For the rest of this section we consider $\mV$ be a locally small cartesian closed category, and consider a $\mV$-symmetric monoidal enriched category $\mC$. 
 built upon two components: a \emph{host language} $\host$ designed as a simply typed lambda calculus, and linear \emph{embedded core language} $\core$.
  
%The idea to start with two independent typed calculi $\host$ and $\core$ and adding operations that allow the communication between the systems is one of the main advantages of the presentation of the LNL-logic introduced by Benton \cite{AMLNLLPTM}. 

The presentation of the typed calculus proposed here gives a \emph{privileged} position to the host language, in the sense that the embedded language is fully described in the host one. This means that, differently from Benton's logic, in $\VCcalcolo$ the communication between the languages $\host$ and $\core$ is not possible in both  directions. Roughly speaking,  we cannot carry all the terms of $\host$ into the language $\core$, modelling a hierarchical dependency of the core on the host language. 
From the point of view of the categorical semantics  we will not have any comonad or adjunction between the categories that provide denotation to the host and the core components but, as we will see in Section~\ref{sec:semantics}, we have just an instance of an \emph{enriched category}.

 \subsection{General presentation of the typed system}
 
% \blue{LO SPOSTEREI O LO ALLEGGERIREI
%In the following, we mainly refer to \cite{RCSILL} and \cite{AMLNLLPTM} for syntactical notation and categorical semantic of the LNL-typed calculus, while for more general introduction to the categorical semantic we refer to \cite{IHOCL}, \cite{CLP}, \cite{CLTT}.
%}

%As the LNL-system,  $\VCcalcolo$  presents two kinds of different types, terms and judgements: $\mV$-types and $\mV$-terms for the host language, and $\mC$-types and $\mC$-terms for the pure linear language embedded in $\mV$. 

We introduce type constructors, well-formation and evaluation rules for $\VCcalcolo$.  
Notice that we describe the rules in logical style, {assuming that we have a
universe of (base) types.}
Concrete instances of $\VCcalcolo$ can be easily obtained setting base types to a set of interest and building other types by type constructors. A concrete instance of type and term syntax  is in our running example (see Examples~\ref{ex:running2}, \ref{ex:running3} and \ref{ex:running4} below). We strictly follow the notation used in~\cite{RCSILL} and  we consider two different kinds of types, terms-in-contexts and judgements: $\host$-types and -terms for the host language $\host$ and $\core$-types and -terms, for the pure linear language $\core$ which is embedded in $\host$.

Type and term judgements and type and term equalities for $\host$ are the following ones
\begin{table}[H]
\centering
\renewcommand\arraystretch{2.5}
\begin{tabular}{l l l l }
$ \typeconstructor{\host}{X}$& $\typeconstructor{\host}{X=Y}$& $\termconstructor{\Gamma}{\host}{t}{X}$ &  $\termconstructor{\Gamma}{\host}{t=s}{X.}$
\end{tabular}\label{tab:vjud}
\end{table}
\noindent
The first judgement says that a syntactical object is of sort type (in the empty context), the second  that two types are equal, the third states that a term $t$ in a context $\Gamma$ has type $X$ and the fourth that two terms are equal. \commento{Notice we are only considering  simple types, i.e. types without dependencies.}
Type and term judgements with corresponding equalities for the core language $\core$ are of the following form
 
\begin{table}[H]
\centering
\renewcommand\arraystretch{2.5}
\begin{tabular}{l l l l }
$ \typeconstructor{\core}{A}$& $\typeconstructor{\core}{A=B}$& $\terconstrmix{\Gamma}{\Omega}{\core}{f}{A}$ &  $\terconstrmix{\Gamma}{\Omega}{\core}{f=h}{A}$
\end{tabular}\label{tab:cjud}
\end{table}
\noindent
where ${\Gamma}\;|\;{\Omega}{}{}$ denotes a \emph{mixed context}. In a mixed context  we split out the host part $\Gamma$ and the core part $\Omega$. Notice that, as a particular case, both  $\Gamma$ and  $\Omega$ can be empty.

Given a host context $\Gamma$ (resp. a core context $\Omega$) we denote as $\tenscont{\Gamma}$ (resp.  $\tenscontlin{\Omega}$) the  product of the elements in  $\Gamma$ (resp. in $\Omega$).

To stress the distinction between the host and the core parts of the syntax, we write $X,Y,Z\dots$ for the $\host$-types, $\Gamma,\Gamma'$ for $\host$-contexts, $x,y,z\dots$ for $\host$-variables, and $s,t,v\dots$ for $\host$-terms. We write $A,B,C\dots$ for the $\core$-types, $\Omega,\Omega'$ for $\core$-contexts, $a,b,c\dots$ for $\core$-variables, and $h,f,g\dots$ for $\core$-terms.

\bigskip

 We consider two fixed sets of \emph{base types}:  
 %a set of base $\core $-types, denoted by $\alpha$ (possibly indexed) and a set of $\host$-base types, denoted by $\iota$ (possibly indexed). 
  a set $\Sigma_\core$ of base $\core$-types, ranged over by $\alpha$ (possibly
 indexed) and a set $\Sigma_\host$ of base H-types, ranged over by $\iota$
 (possibly indexed).

 Similarly to \cite{qwire17a,RS20}, we have a host-type of the form $\Ctype{\ptype}{A}{B}$ (where  $A$ and $B$ are core-types). The proof-theoretical meaning of a type $\Ctype{\ptype}{A}{B}$ is that it represents the type of proofs from $A$ to $B$ in the host language. Given a core term, we can read it as a proof and in this way we can speak about $\core$-proofs at the host level.

The type constructor rules are presented in Figure~\ref{fig:typeconstr}.
Notice that in $(t2c)$ we obtain a $\host$-type starting from two $\core$-types. 
%
%\blue{io qui introdurrei una spiegazione del tipo PROOF.
%CIRC diverso da esponente, entrambi rappresentano funzioni, possiamo comporre ma C non ha la valutazione, per valutare dobbiamo entrare in host}

\begin{figure}[H]
\centering
\renewcommand\arraystretch{2}
%\small
\begin{tabular}{l  c }
 $(t01c)$ &  $\typeconstructor{\core}{\alpha}$ for any base $\core$-type $\alpha$\\
$(t02c)$ & $\typeconstructor{\core}{I}$\\
 $(t1c)$ & \AxiomC{$\typeconstructor{\core}{A}$}
\AxiomC{$\typeconstructor{\core}{B}$}
\BinaryInfC{$\typeconstructor{\core}{A\otimes  B}$}
\DisplayProof\\
$(t2c)$ & \AxiomC{$\typeconstructor{\core}{A}$}
\AxiomC{$\typeconstructor{\core}{B}$}
\BinaryInfC{$\typeconstructor{\host}{\Ctype{\ptype}{A}{B}}$}
\DisplayProof\\
$(t01h)$ &  $\typeconstructor{\host}{\iota}$ for any base $\host$-type $\iota$\\
$(t02h)$ & $\typeconstructor{\host}{1}$\\
$(t1h)$ & \AxiomC{$\typeconstructor{\host}{X}$}
\AxiomC{$\typeconstructor{\host}{Y}$}
\BinaryInfC{$\typeconstructor{\host}{X\times Y}$}
\DisplayProof\\
$(t2h)$ & \AxiomC{$\typeconstructor{\host}{X}$}
\AxiomC{$\typeconstructor{\host}{Y}$}
\BinaryInfC{$\typeconstructor{\host}{\arrowtype{X}{Y}}$}
\DisplayProof\\
\end{tabular}
\caption{Type constructors}\label{fig:typeconstr}
\end{figure}

We require that all the type constructors preserve the equality of types as usual. For example, we require $\Ctype{\ptype}{-}{-}$ to preserve the equality of types, i.e. we have to add the rule

\begin{prooftree}
$(eqT)$\AxiomC{$\typeconstructor{\core}{A=B}$}
\AxiomC{$\typeconstructor{\core}{C=D}$}
\BinaryInfC{$\typeconstructor{\host}{\Ctype{\ptype}{A}{C}=\Ctype{\ptype}{B}{D}}$}
\end{prooftree}
to the system.

%The grammars of terms are those of a simply-typed lambda calculus and a linear calculus with $\mathsf{let}$ constructors for the host $\host$ and the core $\core$, respectively. % We also introduce the host unit term $\ast$ with its host unit type $1$ and the core unit term $\bullet$ with its core unit type $I$.

%In the following example we introduce an instance of  $\VCcalcolo$, and we call it  
%$\vcinstance$. We choose the natural syntax induced by type constructors and equations defined above.  Although this step is clearly theoretically negligible, it is useful for the rest of the section. On the basis of $\vcinstance$, we are going to develop some incremental examples to show some features and properties of  $\VCcalcolo$ throughout the section. 

%
%
%\blue{NOTA: potremmo mettere l'istanza anche fuori dall'esempio, intanto la lascio qui. Secondo me, anche se non \`e il massimo lasciarla nell'esempio, pu\`o essere utile per sottolineare che NON stiamo scrivendo un articolo di linguaggi.}

Let $\ctypeset$ be the set of core types.  The grammar of core types is defined as:
$$
A,B,C~::=~ I~ |~ \alpha ~|~ A\otimes B
$$
where $I$ is the unit type  and $\alpha\in\Sigma_{\core}$.% with $\Sigma_{\core}$ collection of base types. 

Let $\htypeset$ the set of host types. The grammar of host types is defined as 
\newcommand{\intt}{\mathsf{Int}}
\newcommand{\boolt}{\mathsf{Bool}}
$$
X,Y,Z~::=~ 1  ~|~ \iota ~|~X\times Y~|~X\rightarrow Y~|~\ptype(A,B)
$$
where $1$ is the unit type, $\iota\in\Sigma_{\host}$ and $A$ and $B$ in $\ptype{}{}$ are $\core$-types. $\Sigma_{\host}$ could be instantiated, for example, to the set $\{\mathsf{Nat},\mathsf{Bool}\}$.
The type $\ptype(A,B)$ represents and models the meeting point between the two syntaxes.

%The proof-theoretical intuition is that a type $\Ctype{\ptype}{A}{B}$ represents the type of \emph{proofs} from $A$ to $B$ in the host language. Given a core term, we can read it as a proof and in this way we can represent 
%$\core$-proofs at the host level.

%
The grammar of $\core$ terms is
$$
f,g::= \bullet\;|\; a\;|\; f\otimes g\;|\; \letbein{f}{a\otimes b}{h} \;|\; \letbein{f}{\bullet}{h}\;|\;\derelict{s,f}
$$
where $s$ is a host term, i.e. it ranges over the following  $\host$ grammar: 
$$s,t::=\ast\;|\;x\;|\;\angbr{s}{t}\;|\;\inl{s}\;|\;\inr{s}\;|\;\lambda x:X.t\;|\;st\;|\; \promote{a_1,\ldots,a_n.f}$$
with $f$ a core term. 

{ The grammars of terms are those of a simply-typed lambda calculus and a linear calculus with $\mathsf{let}$ constructors for the host $\host$ and the core $\core$, respectively.  We also introduce the host unit term $\ast$ of host unit type $1$, and the core unit term $\bullet$ of core unit type $I$.
Moreover, we add to the host and the core grammars respectively the functions  $\promote{\cdot.\cdot}$ and  $\derelict{\cdot,\cdot}$  that model the \emph{communication} between the host and the core syntaxes.}

\bigskip

The typing rules for $\VCcalcolo$ are in Figure~\ref{fig:typingrules}. 

\noindent 
%\begin{notation}
%{\textbf{Notation}. Given a $\core$-context $\Omega:=[a_1:A_1,\dots,a_n:A_n]$ we use the notion $\otimes_{\Omega}$ to indicate the $\core$-type $\otimes_{\Omega}:=A_1\otimes \cdots \otimes A_n$. }
%\end{notation}

\begin{figure}[H]
%\begin{figure}
\centering
\renewcommand\arraystretch{3}
\small
\begin{tabular}{l  c}
(av)\qquad$\termconstructor{\Gamma_1,x:X,\Gamma_2}{\host}{x}{X}$& (uv)\qquad$\termconstructor{\Gamma}{\host}{\ast}{1}$ \\

 (pv)\AxiomC{$\termconstructor{\Gamma}{\host}{s}{X}$}
\AxiomC{$\termconstructor{\Gamma}{\host}{t}{Y}$}
\BinaryInfC{$\termconstructor{\Gamma}{\host}{\angbr{s}{t}}{X\times Y}$}
\DisplayProof &   ($\pi 1v$) \AxiomC{$\termconstructor{\Gamma}{\host}{v}{X\times Y}$}
\UnaryInfC{$\termconstructor{\Gamma}{\host}{\inl{v}}{X} $}
\DisplayProof\\
 ($\pi 2v$)\AxiomC{$\termconstructor{\Gamma}{\host}{v}{X\times Y}$}
\UnaryInfC{$\termconstructor{\Gamma}{\host}{\inr{v}}{Y} $}
\DisplayProof & (aiv)\AxiomC{$\termconstructor{\Gamma,x:X}{\host}{t}{Y}$}
\UnaryInfC{$\termconstructor{\Gamma}{\host}{\lambda x:X.t}{\arrowtype{X}{Y}}$}
\DisplayProof\\
(aev)\AxiomC{$\termconstructor{\Gamma}{\host}{t}{\arrowtype{X}{Y}}$}
\AxiomC{$\termconstructor{\Gamma}{\host}{s}{X}$}
\BinaryInfC{$\termconstructor{\Gamma}{\host}{t(s)}{Y}$}
\DisplayProof & \AxiomC{$\termconstructor{\Gamma}{\host}{t}{\Ctype{\ptype}{A}{B}}$}\\
%\AxiomC{$\termconstructor{\Gamma}{\mV}{s}{\Ctype{\mC}{B}{C}}$}
%\BinaryInfC{$\termconstructor{\Gamma}{\mV}{\enrichedcomp{t}{s}}{\Ctype{\mC}{A}{C}}$}
%\DisplayProof \\
%$\termconstructor{\Gamma}{\mV}{\id_A}{\Ctype{\mC}{A}{A}}$

(ac)\qquad$\terconstrmix{\Gamma}{a:A}{\core}{a}{A}$ & (uc)\qquad$\terconstrmix{\Gamma}{-}{\core}{\bullet}{I}$\\
(tc)\AxiomC{$\terconstrmix{\Gamma}{\Omega_1}{\core}{f}{A}$}
\AxiomC{$\terconstrmix{\Gamma}{\Omega_2}{\core}{g}{B}$}
\BinaryInfC{$\terconstrmix{\Gamma}{\Omega_1,\Omega_2}{\core}{f\otimes g}{A\otimes B}$}
\DisplayProof &\\
(let1c)\AxiomC{$\terconstrmix{\Gamma}{\Omega_1}{\core}{f}{A\otimes B} $}
\AxiomC{$\terconstrmix{\Gamma}{\Omega_2,a:A,b:B}{\core}{h}{C}$}
\BinaryInfC{$\terconstrmix{\Gamma}{\Omega_1,\Omega_2}{\core}{\letbein{f}{a\otimes b}{h}}{C}$}
\DisplayProof &\\
(let2c)\AxiomC{$\terconstrmix{\Gamma}{\Omega_1}{\core}{f}{I} $}
\AxiomC{$\terconstrmix{\Gamma}{\Omega_2}{\core}{h}{C}$}
\BinaryInfC{$\terconstrmix{\Gamma}{\Omega_1,\Omega_2}{\core}{\letbein{f}{\bullet}{h}}{C}$}
\DisplayProof &
\\
(prom)\AxiomC{$\terconstrmix{\Gamma}{\Omega}{\core}{f}{A}$}
\UnaryInfC{$\termconstructor{\Gamma}{\host}{\promote{a_1,\ldots,a_n.f}}{\Ctype{\ptype}{A_1\otimes\ldots\otimes A_n}{A}}$}
\DisplayProof 

& \mbox{where} $\Omega=[a_1:A_1,\ldots,a_n:A_n]$\\
(der)\AxiomC{$\termconstructor{\Gamma}{\host}{t}{\Ctype{\ptype}{A}{B}}$}
\AxiomC{$\terconstrmix{\Gamma}{\Omega}{\core}{f}{A}$}
\BinaryInfC{$\terconstrmix{\Gamma}{\Omega}{\core}{\derelict{t,f}}{B}$}
\DisplayProof &
\end{tabular}
\caption{Typing rules}
\label{fig:typingrules}
\end{figure}

%\blue{\textbf{Notation:} we denote as $\termconstructor{\Gamma}{\host}{\promote{-.f}}{\Ctype{\ptype}{I}{A}}$ the result of the promotion of a core term of the shape $\terconstrmix{\Gamma}{-}{\core}{f}{A}$.}
%
%The construction internal language of $\mV$ is standard for cartesian closed category, so for example, for every morphism $\mV(X,Y)$ gives a terms constructor in the internal language, with the usual equality rule.
Notice that in the dereliction rule (der) a host-term is ``derelicted'' to the core when applied to a (linear) core-term of the correct type.  This style is consistent with the standard way of presenting this kind of rule (e.g., see~\cite{RS20}). Similarly, the promotion rule (prom) ``promotes'' a term from the core to the host, by ``preserving'' the linear context. In the following, we denote by $\termconstructor{\Gamma}{\host}{\promote{-.f}}{\Ctype{\ptype}{I}{A}}$ the result of the promotion of a core term of the shape $\terconstrmix{\Gamma}{-}{\core}{f}{A}$. The duality between (der) and (prom) is formalized in Figure \ref{dualpromder}.

\subsection{Evaluation}\label{sec:evaluation}
In this section, we introduce an equational theory for $\VCcalcolo$.
First we present a substitution lemma. It is straightforward to prove (by induction on the derivations) that all substitutions  are admissible. 

Notice also that our substitutions are quite similar to the substitutions presented in the enriched effect calculus \texttt{EEC}, see \cite[Prop. 2.3]{EggerMS14}

{%First, we present three substitution rules. 
\begin{lemma}[Substitution]
The following hold:

\begin{itemize}

   \item (sub1) if $\termconstructor{\Gamma}{\host}{s}{X}$ and $\terconstrmix{\Gamma,x:X}{\Omega}{\core}{e}{A}$ then $\terconstrmix{\Gamma}{\Omega}{\core}{e[s/x]}{A}$;
    
    \item (sub2) if $\terconstrmix{\Gamma}{\Omega_1}{\core}{g}{A}$ and $\terconstrmix{\Gamma}{\Omega_2,a:A}{\core}{f}{B}$ then $\terconstrmix{\Gamma}{\Omega_1,\Omega_2}{\core}{f[g/a]}{B}$;
   
  \item (sub3) if $\termconstructor{\Gamma}{\host}{s}{X}$ and $\termconstructor{\Gamma,x:X}{\host}{t}{Y}$ then 
$\termconstructor{\Gamma}{\host}{t[s/x]}{Y}$.

\end{itemize}

\end{lemma}

Notice that $(sub1)$ allows us to substitute a host term for a variable in a core term.
%The rule $(sub2)$ models the substitution of a term (of the correct type) for a variable in the host language.  The rule $(sub3)$  models the substitution of a term (of the correct type) for a variable in the core language .   
In designing $(sub2)$, we follow the line adopted in languages like $\qwire$ and $\ewire$, which have some primitive notions of substitution for composing functions of $\ptype(\cdot,\cdot)$ type.
Finally, $(sub3)$ models the replacement of a term for a variable in the host language as well as in simply typed lambda calculus.

The $\beta$ and $\eta$ rules for the $\host$-terms are those of simply typed lambda calculus  with the connectives $\times$ and $\rightarrow$, see Appendix~\ref{app:evaluation rules}. %\blue{ Abbiamo i contesti moltiplicativi? Aggiungerei poi le chiusure contestuali che seguono (per evitare domande sulle strategie di riduzione), che dici?: $M\rightarrow N\Rightarrow ML\rightarrow NL$, $M\rightarrow N\Rightarrow LM\rightarrow LN$, $M\rightarrow N\Rightarrow \lambda M\rightarrow \lambda.N$. Forziamo la Congruenza?}

%\red{REBUTTAL Notice that the operational semantics that naturally arises by orienting the equational theory will be reasonably quite different from the QWIRE one. The crucial point is the different evaluation style we choose. In HC evaluation is allowed both in the H and in the C. In QEWire the evaluation is completely brought at the host level.}

The $\beta$ and $\eta$ rules for the $\core$-terms are those of the modality-free fragment of Linear Logic (without exponents) and are in Figure~\ref{subrules}. 

\begin{figure}[H]
\centering
\renewcommand\arraystretch{3}
%\small
\begin{tabular}{c}
%\AxiomC{$\terconstrmix{\Gamma}{\Omega_1,a:A}{\mC}{f}{B}$}
%\AxiomC{$\terconstrmix{\Gamma}{\Omega_2}{\mC}{g}{A}$}
%\BinaryInfC{$\terconstrmix{\Gamma}{\Omega_1,\Omega_2}{\mC}{(\lambda a:A.f)(g)=f[g/a]}{B}$}
%\DisplayProof \\
(el1) \AxiomC{$\terconstrmix{\Gamma}{\Omega_1,a:A,b:B}{\core}{f}{C}$}
\AxiomC{$\terconstrmix{\Gamma}{\Omega_2}{\core}{g}{A}$}
\AxiomC{$\terconstrmix{\Gamma}{\Omega_3}{\core}{h}{B}$}
\TrinaryInfC{$\terconstrmix{\Gamma}{\Omega_1,\Omega_2,\Omega_3}{\core}{\letbein{g\otimes h}{a\otimes b}{f}=f[g/a,h/b]}{C}$}
\DisplayProof \\
(el2)\AxiomC{$\terconstrmix{\Gamma}{\Omega_1,c:A\otimes B}{\core}{f}{C}$}
\AxiomC{$\terconstrmix{\Gamma}{\Omega_2}{\core}{g}{A\otimes B}$}
\BinaryInfC{$\terconstrmix{\Gamma}{\Omega_1,\Omega_2}{\core}{\letbein{g}{a\otimes b}{f[a\otimes b/c]}=f[g/c]}{C}$}
\DisplayProof \\
(el3)\AxiomC{$\terconstrmix{\Gamma}{\Omega}{\core}{f}{A}$}
\UnaryInfC{$\terconstrmix{\Gamma}{\Omega}{\core}{\letbein{\bullet}{\bullet}{f}=f}{A}$}
\DisplayProof \\
(el4)\AxiomC{$\terconstrmix{\Gamma}{\Omega_1,a:I}{\mC}{f}{A}$}
\AxiomC{$\terconstrmix{\Gamma}{\Omega_2}{\mC}{g}{I}$}
\BinaryInfC{$\terconstrmix{\Gamma}{\Omega_1,\Omega_2}{\mC}{\letbein{g}{\bullet}{f[\bullet /a]}=f[g/a]}{A}$}
\DisplayProof
\end{tabular}
\caption{Let-Evaluation rules for the core language $\core$}\label{subrules}
\end{figure}

We also add the following rules in Figure~\ref{dualpromder}, modelling the duality between promotion and dereliction operations, where $\Omega:=[a_1:A_1,\dots, a_n:A_n]$:

\begin{figure}[H]
\centering
\renewcommand\arraystretch{3}
%\small

%%%%NUOVE REGOLE    
\begin{tabular}{c}
%\textsf{(A)}\AxiomC{$\terconstrmix{\Gamma}{\Omega}{\core}{f}{A}$}
%\UnaryInfC{$\terconstrmix{\Gamma}{{\Omega}}{\core}{\derelict{\promote{a_1,\ldots,a_n.f},a_1\otimes\ldots\otimes a_n}=f}{A}$}
%\DisplayProof\\
(der-prom)\AxiomC{$\terconstrmix{\Gamma}{\Omega}{\core}{f}{A}$} \AxiomC{$\terconstrmix{\Gamma}{\Omega_1}{\core}{g_1}{A_1}\;\ldots\;\terconstrmix{\Gamma}{\Omega_n}{\core}{g_n}{A_n}$}
\BinaryInfC{$\terconstrmix{\Gamma}{{\Omega}}{\core}{\derelict{\promote{a_1,\ldots,a_n.f},g_1\otimes\ldots\otimes g_n}=f[g_1/a_1\ldots g_n/a_n]}{A}$}
\DisplayProof\\
(prom-der)\AxiomC{$\termconstructor{\Gamma}{\host}{f}{\Ctype{\ptype}{A}{B}}$}
\UnaryInfC{$\termconstructor{\Gamma}{\host}{\promote{a.\derelict{f,a}}=f}{\Ctype{\ptype}{A}{B}}$}
\DisplayProof
\end{tabular}
\caption{Promote and Derelict duality rules}\label{dualpromder}
\end{figure}

With respect to Benton's formulation  of the \LNL\ system, here we add the $\eta$ rules to the calculus following the presentation of~\cite{RCSILL}. Observe that this is necessary to get a completeness result.

Moreover, we highlight again that a crucial difference between the \LNL\ original presentation and our calculus $\VCcalcolo$ is that the interaction between the $\host$-language and $\core$ is \emph{not} symmetric, since we are giving a privileged position to the host language $\host$.  %From the semantic perspective, this asymmetry in the presentation of the syntax is reflected by the fact that we will not have an adjunction but just an instance of an enrichment of a category (see Section~\ref{sec:semantics}). 

Observe that if we consider the \emph{pure} $\core$-judgements, which are those of the form
\[ \terconstrmix{-}{\Omega}{\core}{f}{A}\]
 we obtain exactly a pure linear language, with tensor products.

Since we are programmatically defining  $\VCcalcolo$ as a minimal system, having a linear core is a natural starting choice.  This could appear very limiting and difficult to overcome. However, it is not the case, as shown in the following example.
% {We provide now an example that shows how to extend $\core$ to a non linear core.}
\begin{exa}[Breaking the linearity of the core $\core$, Part I]\label{ex:linpart1}
To extend $\core$ to a non-linear core, we have to force the tensor product to be a cartesian product. This can be done by adding to the core type system the  rules $(uv)$, $(pv)$, $(\pi1)$, $(\pi 2)$ (with mixed contexts) and (eventually) {new base types}.  Notice that old rules for tensor product are  derivable from the new ones. 
\end{exa}
{Before we proceed  to the presentation of $\VCcalcolo$, we recall the ``roadmap'' that motivates our investigation. Linearity is an important property when designing languages for quantum computing. At the same time, when designing multi-language frameworks, some of the languages involved might have linear features. Both of these lines of enquiry require us to know how to deal with a minimal linear system suitably embedded into a host language. We  designed $\VCcalcolo$ following these two intuitions. On the one hand, we follow the categorical logic tradition started by Benton. On the other hand, we also want to model some specific languages for quantum computing and in general for circuit manipulation.  At the same time, we take inspiration from the practice of embedded programming.  Our goal is not to provide a full description of host-core programming theory, but  to show how a principled minimal system such as  $\VCcalcolo$ works and can be extended to provide a useful basis for future systems.}

%\red{Qui valeria consigliava Next we describe our running example built upon the concrete syntax $\VCcalcolo$ from Example~\ref{ex:running1} ?Fatto, ma non abbiamo piu' l'istanza come esempio, quindi ho semplificato}
 
Next, we develop our running example built upon the concrete syntax of $\VCcalcolo$ (see Examples \ref{ex:running2}, \ref{ex:running3} and \ref{ex:running4}). %In the following we describe our running example, in order to clarify the possible interaction between  $\host$ and $\core$. %\blue{Non abbiamo piu' HC calligrafico come istanza}
%In the following example (and along this section) we carry on our running example, built upon the concrete syntax $\vcinstance$ from Example~\ref{ex:running1}. 
%We define a minimal, but yet interesting \emph{instance} of $\vcinstance$,  
We sketch how to equip $\VCcalcolo$ with constants and functions to support basic circuit definition and manipulation, suggesting a \emph{specialization} of $\VCcalcolo$ as a system hosting a (toy) hardware definition language.
{
\begin{remark}
We would like to emphasize that the complete definition of concrete instances of syntax for the definition of paradigmatic languages, as well as the study of operational semantics, are beyond the scope of this work. The exploration of these intriguing developments is left to future research. %The examples are intended to serve as support for the reader.
The examples are intended to be understood as support for the reader and as a tool to suggest possible connections with the literature on languages for circuit manipulation, which are particularly relevant for both classical and probabilistic, reversible, and quantum computations.
\end{remark}
}
\newcommand{\zero}{\mathbf{0}}
\newcommand{\uno}{\mathbf{1}}

\begin{exa}[A circuit core language ]\label{ex:running2}
%  In the minimal case the host can control the core processes inducing a direct form of interaction, but more complex bidirectional interaction can be modeled.

%\subsection{A core with circuits}
%Consider the core language $\icore$ from Example~\ref{ex:running1}. 

Let $\core^{*}$ be an extension of $\core$ for basic circuits definition and manipulation, and let $\host^{*}$ be the corresponding extension of $\host$.  %$\icore^{*}$ takes inspiration from 
The new terms and types of $\core^{*}$  are defined as follows: we denote by $\boldbit$ a new type representing bits, and by $\zero$ and  ${\uno}$ two constants typed as follows:

 $$
\begin{array}{cc}
\terconstrmix{\Gamma}{-}{\core}{\mathbf{\zero}}{\boldbit} & \qquad \terconstrmix{\Gamma}{-}{\core}{\mathbf{\uno}}{\boldbit} \\
 \end{array}
$$
 %  $\boldbit$  $\mathsf{not}, $\mathsf{and}$ and $\mathsf{cnot}$ (the classical controlled-not gate) 
Moreover, we consider some new functions $\mathsf{not}$, $\mathsf{and}$ and $\mathsf{cnot}$ (the controlled-not), representing  boolean gates, together with their typing judgements:

%\blue{
%$$
%\begin{array}[t]{l}
%\terconstrmix{\Gamma}{\Omega,a: \boldbit}{\core}{\mathsf{not}\, a}{\boldbit}\\  \terconstrmix{\Gamma}{\Omega,a: \boldbit, b:\boldbit}{\core}{\mathsf{and}\, (a,b)}{\boldbit} \\
%{     \terconstrmix{\Gamma}{\Omega, a: \boldbit, b:\boldbit}{\core}{\mathsf{cnot}\, (a,b)}{\boldbit\otimes\boldbit}  }
%\end{array}
%$$
%}

\begin{center}
    \renewcommand\arraystretch{3}
\begin{tabular}{c}
\AxiomC{$\terconstrmix{\Gamma}{\Omega}{\core}{f}{\boldbit}$}
\UnaryInfC{$\terconstrmix{\Gamma}{\Omega}{\core}{\mathsf{not}(f)}{\boldbit}$}
\DisplayProof
\\

\AxiomC{$\terconstrmix{\Gamma}{\Omega_1}{\core}{{f}}{\boldbit}$}
\AxiomC{$\terconstrmix{\Gamma}{\Omega_2}{\core}{{g}}{\boldbit}$}
\BinaryInfC{$\terconstrmix{\Gamma}{\Omega_1,\Omega_2}{\core}{\mathsf{and}(f,g)}{\boldbit}$}
\DisplayProof
\\

\AxiomC{$\terconstrmix{\Gamma}{\Omega_1}{\core}{{f}}{\boldbit}$}
\AxiomC{$\terconstrmix{\Gamma}{\Omega_2}{\core}{{g}}{\boldbit}$}
\BinaryInfC{$\terconstrmix{\Gamma}{\Omega_1,\Omega_2}{\core}{\mathsf{cnot}(f,g)}{\boldbit\otimes\boldbit}$}
\DisplayProof
\\

\end{tabular}
\end{center}

Combining these new operators with the substitution rule $(sub2)$ one can easily build, for example, the ``not-and'' $\mathsf{nand}$ operator as follows:

\begin{center}
  \renewcommand\arraystretch{3}
\AxiomC{$\terconstrmix{\Gamma}{\Omega_1}{\core}{{f}}{\boldbit}$}
\AxiomC{$\terconstrmix{\Gamma}{\Omega_2}{\core}{{g}}{\boldbit}$}
\BinaryInfC{$\terconstrmix{\Gamma}{\Omega_1,\Omega_2}{\core}{\mathsf{and}(f,g)}{\boldbit}$}
\UnaryInfC{$\terconstrmix{\Gamma}{\Omega_1,\Omega_2}{\core}{\mathsf{not}(\mathsf{and}(f,g))}{\boldbit}$}
\DisplayProof
\end{center}

Similarly, one can derive, for example, $\terconstrmix{\Gamma}{\Omega_1}{\core}{ \mathsf{and}((\mathsf{not}a_1),(\mathsf{and}\, a_2))  }{\boldbit}$:

\begin{center}
  \renewcommand\arraystretch{3}
\AxiomC{$\terconstrmix{\Gamma}{\Omega_1}{\core}{{h}}{\boldbit}$}
\UnaryInfC{$\terconstrmix{\Gamma}{\Omega_1}{\core}{\mathsf{not}(h)}{\boldbit}$}
\AxiomC{$\terconstrmix{\Gamma}{\Omega_2}{\core}{{f}}{\boldbit}$}
\AxiomC{$\terconstrmix{\Gamma}{\Omega_3}{\core}{{g}}{\boldbit}$}
\BinaryInfC{$\terconstrmix{\Gamma}{\Omega_2,\Omega_3}{\core}{\mathsf{and}(f,g)}{\boldbit}$}
\BinaryInfC{$\terconstrmix{\Gamma}{\Omega_1,\Omega_2,\Omega_3}{\core}{\mathsf{and}(\mathsf{not}(h),\mathsf{and}(f,g))}{\boldbit}$}
%\UnaryInfC{$\terconstrmix{\Gamma}{\Omega_1,\Omega_2}{\core}{\mathsf{not}(\mathsf{and}(f,g))}{\boldbit}$}
\DisplayProof
\end{center}

%\blue{ho commentato la derivazione del let, mi sembrava non aggiungesse tanto e istigasse richieste di altre codifiche, ma se pensi sia utile scommentala}
\commento
{From this latter, by applying $(tc)$ and $(let1c)$ we conclude:
\vspace{-5mm}\begin{scprooftree}{.9}
  \AxiomC{$\terconstrmix{\Gamma}{a'_2:\boldbit}{\core}{a'_2}{\boldbit}$}
\AxiomC{$\terconstrmix{\Gamma}{a''_2:\boldbit}{\core}{a''_2}{\boldbit}$}
\BinaryInfC{$\terconstrmix{\Gamma}{a'_2:\boldbit,a''_2:\boldbit}{\core}{a'_2\otimes a''_2}{\boldbit\otimes\boldbit }$}
\AxiomC{ $\terconstrmix{\Gamma}{a_1:\boldbit,a_2: \boldbit\otimes\boldbit}{\core}{ \mathsf{and}((\mathsf{not}a_1) \otimes(\mathsf{and}\, a_2))  }{\boldbit}$}
\BinaryInfC{$\terconstrmix{\Gamma}{a_1: \boldbit,a'_2: \boldbit,a''_2: \boldbit}{\core}{\letbein{a'_2\otimes a''_2}{a_2}{\mathsf{and}((\mathsf{not}a_1) \otimes(\mathsf{and}\, a_2)) }}{\boldbit}$}
\end{scprooftree}
}

Notice that the linear constraints on $\boldbit$ forbid the duplication and discarding of the variables of type $\boldbit$. This can be crucial if we want strictly control linearity in {contexts like} quantum circuits manipulation, where data duplication violates crucial physical properties such as the no-cloning property.

\end{exa}

\commento{
\begin{remark}
We could ``tune'' the core circuit language to different computational paradigms, e.g, to probabilistic computation. It is enough to choose and interpret gate constants as probabilistic gates. Alternatively, one can maintain classical gates and add to the language (either to the host $\host$ or to the core $\core$--this is a design choice) a constant $\mathsf{rand}:\boldbit$ which randomly returns 0 or 1. \\
We claim that a suitable extension of  $\VCcalcolo$ could also support differentiable programming. This is out from the scope of this paper and it is not part of our short-time interests. However, one can imagine to extend and specialise the core $\core$ to support automatic differentiation procedures and then provide differentiate pieces of code to the host.  
\end{remark}
}

\commento{From a pure programming perspective, a remark is mandatory. Even if the host language is the \emph{main} language in the theory, both $\host$ and $\core$ could be independent calculi, with their own syntax and denotation. In other terms, they both can work as \emph{standalone} languages, with independent operational semantics and categorical interpretation.
\red{If} one decides to embed $\core$ in $\host$, the promotion mechanism allows to import syntax from $\core$ into $\host$ but, as shown in Remark 1, any evaluation of (originally) core terms has to be done in $C$.

This is due to a precise choice: we want to highlight and preserve the original motivation behind the choice of a given core language: its domain-specific nature, its efficiency, its code clarity and so on.

Of course, to avoid these \emph{detours} (sequences of promotions and derelictions) one can introduce specific macros whose operational behaviour \emph{mimics} the core operations. For example, following Remark 1, one could extend $\host$ with a term $\mathsf{COMP}$ (and related equation) which, directly, returns the (host translation of the) function representing the composition of the original core functions $f$ and $g$. The macro $\mathsf{COMP}$ can be interpreted in the host category, and one can use it ``forgetting'' the evaluation rules of the core.

A question could arise: why do not directly introduce the term $\mathsf{COMP}$?
This is possible and, in particular, given the $\mathsf{COMP}$ equational rules, one could derive substitution rules of $\core$ \red{employing} promotion and dereliction.
Notwithstanding, we remark the two different styles, the one we adopt and the solution above, are equivalent but  philosophically different.  
}
%%%%%%%%%%%%%%%%%%%%%%%

In the following remarks, we discuss  how $\VCcalcolo$ can manage the  composition of core functions once a function has been promoted in the host part of the syntax. %From a language perspective, we can interpret % \blue{We comment on this design choice also in Section~\ref{sec:discussions}, focusing on the programming perspective. PENSIAMO SE FAR RIVIVERE IL DISCORSO MACRO DI QUESTA VERSIONE}
%\begin{rem}

 %\end{rem}

\begin{rem}[Composition]\label{remark formal composition}
\commento{\red{Given two $\host$-judgements of the form
\[\termconstructor{\Gamma}{\host}{f}{\Ctype{\ptype}{A}{B}} \mbox{ and }\termconstructor{\Gamma}{\host}{g}{\Ctype{\ptype}{B}{C}} \]
we can construct a term 
\[\termconstructor{\Gamma}{\host}{\enrichedcomp{f}{g}}{\Ctype{\ptype}{A}{C}} \]
which represents the composition in $\host$, given by the derivation}
}

%\blue{La comp ora \`e esplicitamente parametrica anche rispetto agli argomenti, diciamo f,g, che passiamo a s e t, sarebbe $\enrichedcomp{s,f}{t,g}$. }

Given the following terms
\[\termconstructor{\Gamma}{\host}{s}{\Ctype{\ptype}{A}{B}} \mbox{ and }\termconstructor{\Gamma}{\host}{t}{\Ctype{\ptype}{B}{C}} \mbox{ and } \terconstrmix{\Gamma}{a:A}{\core}{a}{A}\]
we can construct a new host-term  denoted by
\[\termconstructor{\Gamma}{\host}{\enrichedcomp{a.s}{t}}{\Ctype{\ptype}{A}{C}} \]
which represents the composition in $\host$, given by the following derivation:

\begin{prooftree}
\AxiomC{$\termconstructor{\Gamma}{\host}{s}{\Ctype{\ptype}{A}{B}}$}
\AxiomC{$\terconstrmix{\Gamma}{a:A}{\core}{a}{A}$}
\BinaryInfC{$\terconstrmix{\Gamma}{a:A}{\core}{\derelict{s,a}}{B}$}
\AxiomC{$\termconstructor{\Gamma}{\host}{t}{\Ctype{\ptype}{B}{C}}$}
%\AxiomC{$\terconstrmix{\Gamma}{b:B}{\core}{b}{B}$}
%\BinaryInfC{$\terconstrmix{\Gamma}{b:B}{\core}{\derelict{t,b}}{C}$}
\BinaryInfC{$\terconstrmix{\Gamma}{a:A}{\core}{\derelict{t,\derelict{s,a}}}{C}$}
\UnaryInfC{$\termconstructor{\Gamma}{\host}{\promote{a.\derelict{t,\derelict{s,a}}}}{\Ctype{\ptype}{A}{C}}$.}
\end{prooftree}

\end{rem}

\begin{rem}[Associativity of the composition]\label{remark associativity of composition}

Combining the axioms with the substitutions rules, we can derive the following equalities describing the composition rule

\begin{table}[H]
\centering
\renewcommand\arraystretch{3}
\begin{tabular}{c}
\AxiomC{$\termconstructor{\Gamma}{\host}{s}{\Ctype{\ptype}{A}{B}}$}
\AxiomC{$\termconstructor{\Gamma}{\host}{t}{\Ctype{\ptype}{B}{C}}$}
\AxiomC{$\termconstructor{\Gamma}{\host}{u}{\Ctype{\ptype}{C}{D}}$}
\AxiomC{$\terconstrmix{\Gamma}{a:A}{\core}{a}{A}$}
\QuaternaryInfC{$\termconstructor{\Gamma}{\host}{\enrichedcomp{a.\enrichedcomp{a.s}{t}}{u}=\enrichedcomp{a.s}{\enrichedcomp{b.t}{u}}}{\Ctype{\ptype}{A}{D}}$}
\DisplayProof \\
\AxiomC{$\termconstructor{\Gamma}{\host}{s}{\Ctype{\ptype}{A}{B}}$}
\AxiomC{$\termconstructor{\Gamma}{\host}{\id_A}{\Ctype{\ptype}{A}{A}}$}
\AxiomC{$\terconstrmix{\Gamma}{a:A}{\core}{a}{A}$}
\TrinaryInfC{$\termconstructor{\Gamma}{\host}{\enrichedcomp{a.\id_A}{s}=s}{\Ctype{\ptype}{A}{B}}$}
\DisplayProof \\
\AxiomC{$\termconstructor{\Gamma}{\host}{s}{\Ctype{\ptype}{A}{B}}$}
\AxiomC{$\termconstructor{\Gamma}{\host}{\id_B}{\Ctype{\ptype}{B}{B}}$}
\AxiomC{$\terconstrmix{\Gamma}{a:A}{\core}{a}{A}$}
\TrinaryInfC{$\termconstructor{\Gamma}{\host}{\enrichedcomp{a.s}{\id_B
}=s}{\Ctype{\ptype}{A}{B}}$}
\DisplayProof
\end{tabular}
\end{table}
\noindent
where the $\host$-terms $\termconstructor{\Gamma}{\host}{\id_A}{\Ctype{\ptype}{A}{A}}$ and  $\termconstructor{\Gamma}{\host}{\id_B}{\Ctype{\ptype}{B}{B}}$ denote the promotion of $\terconstrmix{\Gamma}{a:A}{\core}{a}{A}$ and $\terconstrmix{\Gamma}{b:B}{\core}{b}{B}$ respectively, i.e. $\id_A:=\promote{a.a}$ and $\id_B:=\promote{b.b}$.

%\commento{Qui spiegherei ancora un po' il comp, bisognerebbe introdurre Id e forse suggerire che logicamente cattura il CUT}

We start by checking the associativity law: by definition, the term $\termconstructor{\Gamma}{\host}{\enrichedcomp{a.\enrichedcomp{a.s}{t}}{u}}{\Ctype{\ptype}{A}{D}}$ is given by
$$\termconstructor{\Gamma}{\host}{\promote{a.\derelict{u,\derelict{\promote{a.\derelict{t,\derelict{s,a}}},a}}}}{\Ctype{\ptype}{A}{\blue{D}}}$$
and, applying the duality rule of promotion and dereliction \textsf{(prom-der)}, we have that this is equal to
$$\termconstructor{\Gamma}{\host}{\promote{a.\derelict{u,\derelict{t,\derelict{s,a}}}}}{\Ctype{\ptype}{A}{D}}.$$
Similarly, we have that the term $\termconstructor{\Gamma}{\host}{\enrichedcomp{a.s}{\enrichedcomp{b.t}{u}}}{\Ctype{\ptype}{A}{D}}$ is given by
$$\termconstructor{\Gamma}{\host}{\promote{a.\derelict{\promote{b.\derelict{u,\derelict{t,b}}},\derelict{s,a}}}}
{\Ctype{\ptype}{A}{D}}$$

{Now, since $\derelict{s,a}$ is of type $B$, we can soundly apply the duality rule \textsf{(der-prom)} obtaining $$\termconstructor{\Gamma}{\host}{\promote{a.\derelict{u,\derelict{t,\derelict{s,a}}}}}{\Ctype{\ptype}{A}{D}}.$$
 }
\commento{\red{Now notice that, since $\promote{b.\derelict{u,\derelict{t,b}}}$ is a host-term and hence the core-variable $b$ does not freely occur in such a term,  this term can be written as
$$\termconstructor{\Gamma}{\host}{\promote{a.\derelict{\promote{b.\derelict{u,\derelict{t,b}}},b}[\derelict{s,a}/b]}}
{\Ctype{\ptype}{A}{C}}$$
therefore we can apply the duality rule of promotion and dereliction,  obtaining
$$\termconstructor{\Gamma}{\host}{\promote{a.\derelict{u,\derelict{t,b}}[\derelict{s,a}/b]}}
{\Ctype{\ptype}{A}{C}}$$
and this is exactly
$$\termconstructor{\Gamma}{\host}{\promote{a.\derelict{u,\derelict{t,\derelict{s,a}}}}}{\Ctype{\ptype}{A}{C}}$$
}}
Therefore, this shows that the composition is associative. 

Similarly, we show that $\termconstructor{\Gamma}{\host}{\enrichedcomp{a.s}{\id_B
}=s}{\Ctype{\ptype}{A}{B}}$ can be derived. By definition, the term $\termconstructor{\Gamma}{\host}{\enrichedcomp{a.s}{\id_B}}{\Ctype{\ptype}{A}{B}}$ is
$$\termconstructor{\Gamma}{\host}{\promote{a.\derelict{\promote{b.b},\derelict{s,a}}}}{\Ctype{\ptype}{A}{B}}$$

{Notice that $\derelict{s,a}$ has type $B$, so we can apply the duality rule \textsf{der-prom}, obtaining $$\termconstructor{\Gamma}{\host}{\promote{a.\derelict{s,a}}}{\Ctype{\ptype}{A}{B}}$$ Finally, we can conclude that this is exactly $$\termconstructor{\Gamma}{\host}{s}{\Ctype{\ptype}{A}{B}}$$ by means of the duality rule $\textbf{(prom-der)}$.
}

\commento{
\red{and this can be written as
$$\termconstructor{\Gamma}{\host}{\promote{a.\derelict{\promote{b.b},b}[\derelict{s,a}/b]}}{\Ctype{\ptype}{A}{B}}$$
Therefore we can apply the duality rules of promotion and dereliction, obtaining
$$\termconstructor{\Gamma}{\host}{\promote{a.\derelict{s,a}}}{\Ctype{\ptype}{A}{B}}$$
and, applying the duality rule again, we can conclude that this is precisely 
$$\termconstructor{\Gamma}{\host}{s}{\Ctype{\ptype}{A}{B}}$$
}
}
{Finally, the equality $\termconstructor{\Gamma}{\host}{\enrichedcomp{a.\id_A}{s
}=s}{\Ctype{\ptype}{A}{B}}$ can be derived  as follows. Consider the explicit definition of ${\enrichedcomp{a.\id_A}{s
}}$:
$$\termconstructor{\Gamma}{\host}{\promote{a.\derelict{s,\derelict{\promote{a.a},a}}}}{\Ctype{\ptype}{A}{B}}$$
By appliyng the \textsf{(der-prom)} rule to the (core) subterm $\derelict{\promote{a.a},a}$ we obtain 
$$\termconstructor{\Gamma}{\host}{\promote{a.\derelict{s,{a}}}}{\Ctype{\ptype}{A}{B}}$$
and then by means of the \textsf{(prom-der)} rule we can conclude 
$$\termconstructor{\Gamma}{\host}{s}{\Ctype{\ptype}{A}{B}}$$.}
\commento{
\red{Finally, the equality $\termconstructor{\Gamma}{\host}{\enrichedcomp{a.\id_A}{s
}=s}{\Ctype{\ptype}{A}{B}}$ can be derived by applying directly the duality rule of promotion and dereliction to the explicit definition of the composition.}}
\end{rem}

\noindent
In Examples~\ref{ex:running3} and ~\ref{ex:running4} we suggest how to accommodate the syntax of the host language for
the extension defined in Example~\ref{ex:running2}. In particular, we focus on the circuits manipulation at host level.

\newcommand{\senrichedcomp}[2]{\mathbf{seq}(#1,#2)}

\begin{exa}[Hosted Circuits]\label{ex:running3}

%\blue{Nota: in tutto il paper controllare di usare l'underscore per il promote da contesto vuoto (costanti).}

Let us consider the extension of $\VCcalcolo$  defined in Example \ref{ex:running2}. Employing the promotion rule $(prom)$, we can represent the new core constants $\zero$ and $\uno$ at the host level:

$$
\begin{array}{cc}
  \termconstructor{\Gamma}{\host}{\promote{\mathbf{\mbox{\_\;}.\zero}}}{\Ctype{\ptype}{I}{\boldbit}}  &  \termconstructor{\Gamma}{\host}{\promote{\mathbf{\mbox{\_}.\uno}}}{\Ctype{\ptype}{I}{\boldbit}}\\
 %               \termconstructor{\Gamma}{\host}{\promote{\mathsf{not} \, a}}{\Ctype{\ptype}{\boldbit}{\boldbit}}\\
%  \termconstructor{\Gamma}{\host}{\promote{\mathbf{\uno}}}{\Ctype{\ptype}{\tenscont{\Omega}}{\boldbit}}  &
%   \termconstructor{\Gamma}{\host}{\promote{\mathsf{and}\, a}}{\Ctype{\ptype}{\boldbit\otimes\boldbit}{\boldbit}}     \\
% \termconstructor{\Gamma}{\host}{\promote{\mathsf{cnot}\, a}} {\Ctype{\ptype}{\boldbit\otimes\boldbit}{\boldbit}}  &
 %    \termconstructor{\Gamma}{\host}{\mathsf{not}(\mathsf{and}\, a)}{\Ctype{\ptype}{\boldbit\otimes\boldbit}{\boldbit} }\\
%\multicolumn{2}{c}{     \termconstructor{\Gamma}{\host}{\promote{ \letbein{a'_2\otimes a''_2}{a_2}{\mathsf{and}((\mathsf{not}a_1) \otimes(\mathsf{and}\, a_2)) }}  }{\Ctype{\ptype}{\boldbit\otimes\boldbit\otimes\boldbit}{\boldbit} }}
\end{array}
$$
Similarly, one can formally reason at the host-level about circuits and gates (see Example~\ref{ex:running2}), by promoting the new core terms,  e.g,
%given the core judgements $\terconstrmix{\Gamma}{\Omega_1}{\core}{\mathsf{not}(f)}\boldbit$ and $\terconstrmix{\Gamma}{\Omega_2,\Omega_3}{\core}{\mathsf{and}(g,h)}\boldbit$, where $\Omega_1=\{a_1,\ldots,a_n\}$, $\Omega_2=\{b_1,\ldots,b_m\}$, $\Omega_3=\{c_1,\ldots,c_k\}$, one has:}

 %$\termconstructor{\Gamma}{\host}{\promote{a.\mathsf{not} \, a}}{\Ctype{\ptype}{\boldbit}{\boldbit}}$ and   $ \termconstructor{\Gamma}{\host}{\promote{a.\mathsf{cnot}\, a}} {\Ctype{\ptype}{\boldbit\otimes\boldbit}{\boldbit}}$.\\
$$\termconstructor{\Gamma}{\host}{\promote{a_1\ldots a_n.\mathsf{not} \,( f)}}{\Ctype{\ptype}{\otimes_\Omega}{\boldbit}}$$ and   $$ \termconstructor{\Gamma}{\host}{\promote{b_1\ldots b_m,c_1\ldots c_k.\mathsf{cnot}\,(g,h)}} {\Ctype{\ptype}{\otimes_{\Omega_1,\Omega_2}}{\boldbit}}$$
where $\Omega=[a_1:A_1\ldots a_n:A_n] $, $\Omega_1=[b_1:B_1\ldots b_m:B_m] $ and  $\Omega_3=[c_1\ldots c_k] $ are the linear contexts of $f$, $g$, and $h$ respectively and we denote $A_1\otimes \ldots \otimes A_n\otimes B_1 \otimes\ldots  \otimes B_m$ as $\otimes_{\Omega_1,\Omega_2}$.

\commento{
\begin{remark}
\blue{To evaluate a circuit promoted to the host, we can directly use  the $\derelict{}$ rule, and  apply it to a suitable core input. Notice also that a promoted constant $\zero$ or $\uno$ has to be derelicted before being passed as an argument of a (suitable) host term, for example a promoted core-term representing a circuit. Promoted constants plays a role when we move toward a more complex interaction between host and core and, in general, if we want to represent in the host both core circuits and inputs. We show an example in Example \ref{ex:running3}.}
\end{remark}
}

The language $\host^{*}$ can also manage a flow-control on (core) circuits:
 for example, it can erase and duplicate circuits and it can compose circuits in parallel and sequence.

\newcommand{\circeval}[2]{\mathsf{eval}(#1,#2)}

Given the host terms $\termconstructor{\Gamma}{\host}{f}{\Ctype{\ptype}{A}{B}}$,  $\termconstructor{\Gamma}{\host}{g}{\Ctype{\ptype}{C}{D}}$,  $\termconstructor{\Gamma}{\host}{h}{\Ctype{\ptype}{B}{D}}$, and the core terms {$\terconstrmix{\Gamma}{a:A}{\core}{a}{A}$},  $\terconstrmix{\Gamma}{c:C}{\core}{c}{C}$, we can infer derivations
$$\enrichedparallel{a.f}{c.g}::=\promote{a,c.(\derelict{f,a}\otimes\derelict{g,c})}$$
$$\enrichedcomp{a.f}{h}::=\promote{a.(\derelict{f,\derelict{f,a}})}$$
which encode parallelization and sequentialization of core circuits respectively (for suitable given inputs). The term $\enrichedparallel{a.f}{c.g}$ has type $\Ctype{\ptype}{A\otimes C}{B\otimes D}$ and the term $\enrichedcomp{a.f}{h}$ has type $\Ctype{\ptype}{A}{ D}$ .
Notice that we encode circuits sequentialization exactly as the  $\enrichedcomp{-.f}{g}$ in Remark~\ref{remark formal composition}.

In Figure \ref{tab:circComp} we show the derivation for the closed host term
$${\lambda x_0: \Ctype{\ptype}{A_0}{B_0}.\lambda x_1: \Ctype{\ptype}{A_1}{B_1}.\enrichedparallel{a_0.x_0}{a_1.x_1}}$$ with  core inputs $a_0:A_0$ and $a_1:A_1$. 
%The term is parametric w.r.t. the circuits one aim to put in parallel and evaluate. 
%
This term has  type 
$\arrowtype{\Ctype{\ptype}{A_0}{B_0}}{\arrowtype{\Ctype{\ptype}{A_1}{B_1}}{\Ctype{\ptype}{A_0\otimes A_1}{B_0\otimes B_1 }}}$.
 Notice that the core variables $a_0$ and $a_1$ do not freely occur in the host. %so they are morally bounded.
%$\arrowtype{\Ctype{\ptype}{A_0}{B_0}}{\arrowtype{\Ctype{\ptype}{A_1}{B_1}}{\Ctype{\ptype}{A_0\otimes A_1}{B_0\otimes B_1 }}}$.
For the sake of space, 
we write $\Gamma_{\!01}$ to denote $x_0: \Ctype{\ptype}{A_0}{B_0},x_1: \Ctype{\ptype}{A_1}{B_1}$.

\begin{figure}[H]%%%%%%%%%%%%%%%%%%%%%%%%%%%%%%%%%%%%%%%%%%%%%%%%%%%%%%%%%%%%%%%%%%%%%%%%%%%%%%%%%%%%%%%%%%%%%%%%%%%%%%%%%%%%%% \centering
 \scalebox{.9}{ }\\[-1cm]
\begin{scprooftree}{.8}
\hspace*{-1cm}\AxiomC{$\termconstructor{\Gamma_{\!01}}{\host}{x_0}{\Ctype{\ptype}{A_0}{B_0}}$}
\AxiomC{$\terconstrmix{\Gamma_{\!01}}{a_0:A_0}{\core}{{a_0}}{A_0}$}
\BinaryInfC{$\terconstrmix{\Gamma_{\!01}}{a_0:A_0}{\core}{\derelict{x_0,a_0}}{B_0}$}
\AxiomC{$\termconstructor{\Gamma_{\!01}}{\host}{x_1}{\Ctype{\ptype}{A_1}{B_1}}$}
\AxiomC{$\terconstrmix{\Gamma_{\!01}}{a_1:A_1}{\core}{{a_1}}{A_1}$}
\BinaryInfC{$\terconstrmix{\Gamma_{\!01}}{a_1:A_1}{\core}{\derelict{x_1,a_1}}{B_1}$}
\BinaryInfC{$\terconstrmix{\Gamma_{\!01}}{a_0:A_0,a_1:A_1}{\core}{\derelict{x_0, a_0}\otimes \derelict{x_1, a_1}}{B_0\otimes B_1}$}
\UnaryInfC{$\termconstructor{\Gamma_{\!01}}{\host}{\enrichedparallel{a_0.x_0}{a_1.x_1}}{\Ctype{\ptype}{A_0\otimes A_1}{B_0\otimes B_1 }}$}
\UnaryInfC{$\termconstructor{x_0: \Ctype{\ptype}{A_0}{B_0}}{\host}{\lambda x_1: \Ctype{\ptype}{A_1}{B_1}.\enrichedparallel{a_0.x_0}{a_1.x_1}}
                {\arrowtype{\Ctype{\ptype}{A_1}{B_1}}{\Ctype{\ptype}{A_0\otimes A_1}{B_0\otimes B_1 }}}$}
\UnaryInfC{$\termconstructor{}{\host}{\lambda x_0: \Ctype{\ptype}{A_0}{B_0}.\lambda x_1: \Ctype{\ptype}{A_1}{B_1}.\enrichedparallel{a_0.x_0}{a_1.x_1}}
                {\arrowtype{\Ctype{\ptype}{A_0}{B_0}}{\arrowtype{\Ctype{\ptype}{A_1}{B_1}}{\Ctype{\ptype}{A_0\otimes A_1}{B_0\otimes B_1 }}}}$}
\end{scprooftree} %%%%%%%%%%%%%%%%%%%%%%%%%%%%
%\begin{prooftree} %%%%%%%%%%%%%%%%%%%%%%
%\AxiomC{$\termconstructor{x: \Ctype{\ptype}{A}{B}}{\host}{x}{\Ctype{\ptype}{A}{B}}$}
%\UnaryInfC{$\terconstrmix{x: \Ctype{\ptype}{A}{B}}{a:A}{\core}{\derelict{x}}{B}$}
%\AxiomC{$\termconstructor{z:\Ctype{\ptype}{B}{C}}{\host}{z}{\Ctype{\ptype}{B}{C}}$}
%\UnaryInfC{$\terconstrmix{z:\Ctype{\ptype}{B}{C}}{b:B}{\core}{\derelict{x}}{C}$}
%\BinaryInfC{$\terconstrmix{x: \Ctype{\ptype}{A}{B},z:\Ctype{\ptype}{B}{C} }{a:A}{\core}{\derelict{g}[\derelict{f}/b]}{C}$}
%\UnaryInfC{$\termconstructor{x: \Ctype{\ptype}{A}{B},z:\Ctype{\ptype}{B}{C} }{\host}{\enrichedcomp{x}{z}}{\Ctype{\ptype}{A}{C}}$}
%\UnaryInfC{$\termconstructor{x: \Ctype{\ptype}{A}{B} }{\host}{\lambda z:\Ctype{\ptype}{B}{C} .\enrichedcomp{x}{z}}{\arrowtype{\Ctype{\ptype}{B}{C} }{\Ctype{\ptype}{A}{C}}}$}
%\UnaryInfC{$\termconstructor{ }{\host}{\lambda x: \Ctype{\ptype}{A}{B}.\lambda z:\Ctype{\ptype}{B}{C} .\enrichedcomp{x}{z}}{
%                                                                             \arrowtype{\Ctype{\ptype}{A}{B}}{\arrowtype{\Ctype{\ptype}{B}{C} }{\Ctype{\ptype}{A}{C}}}}$}
%\end{prooftree}https://www.overleaf.com/project/63ce698118a10423bc0b413d
\caption{Host type derivation for parametric parallellelization of circuits.}
\label{tab:circComp}
\end{figure}%%%%%%%%%%%%%%%%%%%%%%%%%%%%%%%%%%%%%%%%%%%%%%%%%%%%%%%%%%%%%%%%%%%%%%%%%%%%%%%%%%%%%%%%%%%%%%%%%%%%%%%%%%%%%%
Similarly, we can also derive the closed term 
$${\lambda x: \Ctype{\ptype}{A}{B}.\lambda z:\Ctype{\ptype}{B}{C} .\enrichedcomp{a.x}{z}}$$
which has type ${\arrowtype{\Ctype{\ptype}{A}{B}}{\arrowtype{\Ctype{\ptype}{B}{C} }{\Ctype{\ptype}{A}{C}}}}$.
\end{exa}
\noindent
Example~\ref{ex:running3} shows how a simply typed language hosting a minimalistic circuit design calculus allows some interesting manipulation of core programs and, in some sense, a limited and oriented form of interoperability. 

%Clearly, the addition of syntactic sugar to the host or, more interestingly, the extension of the  expressive power of  both host and core languages provides more powerful and flexible host-core languages. This is out of the scope of this paper and left to future work (see Section\ref{sec:conclusions}). Nevertheless, we conclude the section with a concluding development of our running example. See Section~\ref{sec:conclusions} for further consideration about possible host language extensions and improvements of the communication mechanism between $\host^{*}$ and $\core^{*}$.

%We conclude the section with a development of our running example.

\begin{exa}[Controlling Circuits, some reflections]\label{ex:running4}
Consider the instance $\host^{*}$ from Example~\ref{ex:running3} and assume to add boolean constants $\mathsf{true}$ and $\mathsf{false}$, both typed $\boldbool$ (we omit the obvious typing judgements), and  also the $\texttt{if}\_ \texttt{then}\_\texttt{else}\_$ construct, typed by the rule
%\blue{\`e un'ovviet\`a... la lasciamo?}
\begin{prooftree}
  \AxiomC{$\termconstructor{\Gamma}{\host}{t}{\boldbool}$}
\AxiomC{$\termconstructor{\Gamma}{\host}{s_i}{X}$}
\BinaryInfC{$\termconstructor{\Gamma}{\host}{  \texttt{if} \,t\, \texttt{then} \,s_0\, \texttt{else} \,s_1\, }{X}$}
\end{prooftree}
Notice that we can use the $\texttt{if}\_ \texttt{then}\_\texttt{else}\_$ for the control of core programs. For example, given two circuits and according to the guard evaluation, we can decide  which circuit one should modify and use. Let $\Gamma$ be $x:\boldbool, f:\Ctype{\ptype}{\boldbit\otimes\boldbit}{\boldbit\otimes\boldbit}, g:\Ctype{\ptype}{\boldbit\otimes\boldbit}{\boldbit\otimes\boldbit}$. 

\commento{ Let $\Gamma$ be $x:\boldbool, f:\Ctype{\ptype}{\boldbit\otimes\boldbit}{\boldbit\otimes\boldbit}, g:\Ctype{\ptype}{\boldbit\otimes\boldbit}{\boldbit\otimes\boldbit}$. 
 }
It is easy to verify that the following judgments, which enable the representation of both input and core programs at the host level, are well-typed:% (as a title of example we show the derivation of the third one in Appendix \ref{app:evaluation rules}):

$$
\begin{array}[t]{l}
%  \termconstructor{\Gamma}{\core}{\derelict{{f,\mathbf{\zero}\otimes\mathbf{\zero}}}}{\boldbit\otimes\boldbit}\\
  % \termconstructor{\Gamma}{\host}{\promote{\mathbf{\zero}\otimes\mathbf{\zero}.\derelict{{f,\mathbf{\zero}\otimes\mathbf{\zero}}}}}{\Ctype{\ptype}{\boldbit\otimes\boldbit}{\boldbit\otimes\boldbit}} \mbox{identit\`a}\\
   \termconstructor{\Gamma}{\host}{\enrichedcomp{a.\promote{-.\zero\otimes\zero}}{f}}{\Ctype{\ptype}{I}{\boldbit\otimes\boldbit}} \\
%  \termconstructor{\Gamma}{\host}{\enrichedcomp{\mathbf{\uno}\otimes\mathbf{\uno}.f}{g} }{\boldbit\otimes\boldbit}\\
 { \termconstructor{\Gamma}{\host}{\enrichedcomp{a.\promote{-.\uno\otimes\uno}}{g}}{\Ctype{\ptype}{I}{\boldbit\otimes\boldbit}} }\\
  { \termconstructor{\Gamma}{\host}{\enrichedcomp{b.\promote{-.\uno\otimes\uno}}{\enrichedcomp{c.f}{g}}}{\Ctype{\ptype}{I}{\boldbit\otimes\boldbit}} }\\
 % \termconstructor{\Gamma}{\host}{\enrichedcomp{b.f}{g} }{\Ctype{\ptype}{\boldbit\otimes\boldbit}{\boldbit\otimes\boldbit}}
\end{array}
$$
%
%Let $B_0$ be the program $ \enrichedcomp{f_0}{}$
{ where $a: I$, $b: I$ and $c:\boldbit\otimes\boldbit$. The first term, call it  $s$,  represents the application of the program $f$ to the (promoted)  input $\promote{-.\zero\otimes\zero}$. The second, call it $t$, represents the application of the program $g$ to the (promoted)  input $\promote{-.\uno\otimes\uno}$. The third, call it $w$, models the application  to a promoted input $\promote{-.\uno\otimes\uno}$ of the sequentialization of the program (circuits) $f$ and $g$.}
 %Notice that the $\enrichedcomp{-.-}{-}$ (and the subtended $\derelict{-,-}$ rule), also requires a pure core term: this is the role of the core variable $a$. 
 %As observed in the previous example, one could directly derelict $s$ by applying it to a core term, e.g. ${\mathbf{\zero}\otimes\mathbf{\zero}}$ (obtaining something of the shape $\derelict{s,\mathbf{\zero}\otimes\mathbf{\zero}}$), but this yields a core judgement. 
% This allows the representation of both input and core programs within the host system.
%(NOTA:   core term, seconda riga lo rappresento a livello host perch\`e altrimenti non posso codificare un termine parametrico. Alternativa: $$ \termconstructor{\Gamma}{\host}{\enrichedcomp{a}{\promote{-.\zero\otimes\zero},f}}{\Ctype{\ptype}{I}{\boldbit\otimes\boldbit}})$$ 
%oppure
%$$ \termconstructor{\Gamma}{\host}{\enrichedcomp{\mathbf{\zero}\otimes\mathbf{\zero}.id_A}{{f}} }{\boldbit\otimes\boldbit})$$.
%The second term, call it $t$, applies to a core input the sequentialization of programs (circuits) $f$ and $g$. %For example, we can write   $\termconstructor{\Gamma}{\host}{\enrichedcomp{\mathbf{\uno}\otimes\mathbf{\uno}.f}{g} }{\boldbit\otimes\boldbit}$ as a particular case.

Now, let us endeavor to compose a conditional construct that assesses two distinct programs, effectively corresponding to two disparate circuit configurations

For example, we can  derive the term $\mathtt{IfCirc}$, defined as

{$$\lambda x.\lambda f.\lambda g.
\texttt{if}\,x\, \texttt{then}\;\enrichedcomp{a.\promote{-.\zero\otimes\zero}}{f}\;$$
$$
 \texttt{else}\;{\enrichedcomp{b.\promote{-.\uno\otimes\uno}}{\enrichedcomp{c.f}{g}}}$$}

\commento{
\red{
$$%\lambda x:\boldbit.\lambda f:\Ctype{\ptype}{\boldbit\otimes\boldbit}{\boldbit\otimes\boldbit}. \lambda g:\Ctype{\ptype}{\boldbit\otimes\boldbit}{\boldbit\otimes\boldbit}.
\lambda x.\lambda f.\lambda g.
\texttt{if}\,x\, \texttt{then}\;\enrichedcomp{a.\promote{-.\zero\otimes\zero}}{f}\; \texttt{else}\;{\enrichedcomp{b.f}{g}}$$}}
where, for the sake of readability, we omitted the type of lambda-abstracted variables. The term $\mathtt{IfCirc}$ performs a simple computation: evaluates a guard (the parameter passed to the bound variable $x$) and, according to the result, evaluates the subprogram $\lambda f. s$ or \blue{$\lambda g. w$}. 
 It is direct to check that this term (from the empty context) has the following type:
{
$$\arrowtype{\boldbit}{\arrowtype{ \Ctype{\ptype}{\boldbit\otimes\boldbit}{\boldbit\otimes\boldbit} }{
     \arrowtype{\Ctype{\ptype}{\boldbit\otimes\boldbit}{\boldbit\otimes\boldbit}}{\Ctype{\ptype}{I}{\boldbit\otimes\boldbit}}}}$$
    } \end{exa}

Example~\ref{ex:running4} shows a very basic form of control by the host on the core.  According to further extensions of the core $\core$, one can define more significant control operations.
For example, if we move to a Turing complete host, or at least to a typed system as expressive as G\"odel System $\mathcal{T}$, we can define interesting circuit construction operators such as the parametric circuit sequentialization and parallelization (which take a numeral $n$ as input and iterate the operation $n$ times). These operations are quite useful in circuit construction: for instance, in quantum programming languages, they allow us to encode parametric versions of well-known quantum algorithms~\cite{PPZ19}.
We postpone this intriguing topic to future endeavors (refer to Section \ref{sec:conclusions}).

\subsection{The category $\catvcalcolo$}
We conclude this section with a key definition that allows us to formulate our main result, the internal language Theorem \ref{theorem internal laguage}.  To prove the strong correspondence between syntax and models, we first define the category $\catvcalcolo$, whose objects are theories (Definition~\ref{def:theories}) and whose morphisms are translations (Definition~\ref{def:translation}). In this section, we strictly follow
the notation of~\cite{RCSILL}.%, we define the $\VCcalcolo$-theories and translations. 

\begin{defi}[$\VCcalcolo$-\bemph{theory}]\label{def:theories}
%Given a typed calculus $\VCcalcolo$, 
A typed system $\theory$ is a $\VCcalcolo$-\bemph{theory} if it is an extension of $\VCcalcolo$ with \bemph{proper}-$\theory$-\bemph{axioms}, namely with new base type symbols, new type equality rules, new term symbols and new equality rules of terms.
\end{defi}

\begin{defi}[$\VCcalcolo$-\bemph{translation}]\label{def:translation}
Given two $\VCcalcolo$-theories $\theory_1$ and $\theory_2$, a $\VCcalcolo$-\bemph{translation} $M$ is a function from types and terms of $\theory_1$ to types and terms of $\theory_2$ preserving type and term judgements, which means that it sends a type $X$ for which $\typeconstructor{\host}{X}$ is derivable in $\theory_1$ to a type $M(X)$ such that $\typeconstructor{\host}{M(X)}$ is derivable in $\theory_2$, it sends a type $A$ for which $\typeconstructor{\core}{A}$  is derivable in $\theory_1$ to a type $M(A)$ such that $\typeconstructor{\core}{M(A)}$ is derivable in $\theory_2$, it sends a $\host$-term  $t$ such that $\termconstructor{\Gamma}{\host}{t}{X}$ is derivable in $\theory_1$ to a $\host$-term $M(t)$ such that
\[\termconstructor{M(\Gamma)}{\host}{M(t)}{M(X)}\]
is derivable in $\theory_2$, where $M(\Gamma)\equiv [x_1:M(X_1),\dots,x_n:M(X_n)]$ if $\Gamma\equiv [x_1:X_1,\dots, x_n:X_n]$, and sending a $\core$-term $f$ for which $\terconstrmix{\Gamma}{\Delta}{\core}{f}{A}$ is derivable in $\theory_1$ to a typed term $M(f)$ such that 
\[\terconstrmix{M(\Gamma)}{M(\Delta)}{\core}{M(f)}{M(A)}\]
is derivable in $\theory_2$, where $M(\Delta)\equiv [a_1:M(A_1),\dots,a_m:M(A_m)]$ if $\Delta\equiv [a_1:A_1,\dots,a_m:A_m]$, satisfying in particular
\[M(x)=x \mbox{ and } M(a)=a\]
and preserving the $\VCcalcolo$-types and terms constructors, and their equalities.
\end{defi}

\begin{defi}[Category of $\VCcalcolo$-theories]\label{def category of theories}
The category $\catvcalcolo$ has $\VCcalcolo$-theories as objects and $\VCcalcolo$-translations as morphisms.
\end{defi}
%\begin{rem}
%
Observe that a translation between the two host parts of theories induces a \emph{change of base} for the core language, which means the core part of the first host language can be embedded in the second one using the translation between the hosts. In particular a translation can be formally split into two components: one acting on the host part, and the other one acting on the core language induced by the change of base. We will provide a formal definition of change of base in the next section where we introduce the $\VCcalcolo$-semantics.
%\end{rem}

\section{Categorical semantics}\label{sec:semantics}

In this section we show that a model for $\VCcalcolo$ is given by a pair $(\mV,\mC)$, where $\mV$ is a cartesian closed category, and $\mC$ is a $\mV$-enriched symmetric monoidal category, see \cite{BCECT}. 
%\red{\textbf{Notation}: in this work by  $\mV$-enriched symmetric monoidal category we mean a symmetric monoidal category enriched in $\mV$.}
Moreover we prove a correspondence between the typed calculus $\VCcalcolo$ and its categorical models via the notion of \emph{internal language}.

Recall that $\VCcalcolo$ provides  an internal language of the category   $\modelcat{\VCcalcolo}$ of its models if one proves an  equivalence between $\modelcat{\VCcalcolo}$ and the category $\catvcalcolo$ of the theories of the language. In the categorical logic literature several examples and applications of the notion of internal language can be found, e.g. see~\cite{IHOCL,CLP,SAE}. 
In this paper we mainly follow~\cite{RCSILL,milly2}, where the authors discuss models and morphisms for Intuitionistic Linear Logic (\texttt{ILL}), Dual Intuitionistic Linear Logic (\texttt{DILL}) and Linear-Non-Linear Logic (\LNL). 
The leading idea of ~\cite{RCSILL,milly2} is that soundness and completeness theorems are not generally sufficient to identify the most appropriate class of denotational models for a typed calculus, unless, as pointed out in~\cite{RCSILL}, the same typed calculus provides the internal language of the models we are considering, as anticipated in the introduction.

In Section \ref{sec:travail} we also use the internal language theorem to answer some questions about  host-core languages, showing how this notion is useful also out from its usual range of applications. 

Before we proceed to state and prove technical definitions and results, we sketch the key steps necessary to achieve the goal. 
Given the typed calculus $\VCcalcolo$, consider its category of theories $\catvcalcolo$ (as  in Definition \ref{def category of theories}). 
Our purpose is to define \bemph{a category of models} $\modelcat{\VCcalcolo}$ such that $\VCcalcolo$ \bemph{provides an internal language for these models}.  This is the more interesting and challenging part of the proof and concerns the definition of the notion of morphism of models.  The objects of $\modelcat{\VCcalcolo}$ are models of $\VCcalcolo$ (see Definition~\ref{def:models}), and the suitable definition of morphisms of $\modelcat{\VCcalcolo}$ requires introducing the formal notion of \emph{change of base} (see Defintion~\ref{def:changebase}). 
\commento{Oversimplifying, \red{it} explains how changing the host language one can induce a change of the embedded language.} 

Once $\modelcat{\VCcalcolo}$ has been defined, we known from~\cite{RCSILL} that the typed calculus $\VCcalcolo$ provides an internal language of $\modelcat{\VCcalcolo}$ if we can establish an equivalence $\catvcalcolo\equiv \modelcat{\VCcalcolo}$ between the category of models and the category of theories for $\VCcalcolo$.

To show this equivalence, we need to define a pair of functors: the first functor 
\[\freccia{\catvcalcolo}{S}{\modelcat{\VCcalcolo}}\]
assigns to a theory its syntactic category, and the second functor \[\freccia{\modelcat{\VCcalcolo}}{L}{\catvcalcolo}\] assigns to a pair $(\mV,\mC)$ (a model of $\VCcalcolo$, see Definition~\ref{def:models}) its internal language, as a $\VCcalcolo$-theory. 
Following the schema sketched above, we can proceed with technical results.

Since we will work in the context of enriched categories, we first recall some basic notions about them. We refer to~\cite{BCECT,A2CC} for a detailed introduction to the theory of enriched categories, and we refer to~\cite{CCQCLLECT,En-lambda} for recent applications and connections with theoretical computer science.
%For the rest of this section, let $\mV$ be a fixed monoidal category $\mV=(\mV_0,\otimes,I,a,l,r)$, where $\mV_0$ is a category, $\otimes$ is the tensor product, $I$ is the \emph{unit} object of $\mV_0$, $a$ defines the associativity isomorphism and $l$ and $r$ define the left and right unit isomorphism respectively.  

% where $\mV$ is a cartesian closed category, and $\mC$ is a $\mV$-symmetric monoidal category. 

Let $\mV$ be a fixed monoidal category $\mV=(\mV_0,\otimes,I,a,l,r)$, where $\mV_0$ is a (locally small) category, $\otimes$ is the tensor product, $I$ is the \emph{unit} object of $\mV_0$, $a$ defines the associativity isomorphism and $l$ and $r$ define the left and right unit isomorphism, respectively.

%\begin{def}[Enriched Category]
An \emph{enriched category} $\mV$-\bemph{category}  $\mC$ consists of a class $\ob{\mC}$ of \bemph{objects}, a \bemph{hom-object} $\mC(A,B)$ of $\mV_0$ for each pair of objects of $\mC$, a \bemph{composition law}
\[\freccia{\mC(B,C)\otimes \mC(A,B)}{c_{ABC}}{\mC(A,C)}\]
for each triple of objects, and an \bemph{identity element} 
\[\freccia{I}{j_A}{\mC(A,A)}\]
for each object, subject to associativity and identity laws, see \cite{BCECT,A2CC}.  
%\end{def}

Much of category theory can be reformulated in the enriched setting, such as the notion of monoidal category we will consider here.
Taking for example $\mV=\Set, \Cat, \mathbf{2}, \Ab$ one can re-find the classical notions of (locally small) ordinary category, 2-category, pre-ordered set, and additive category. 
The class of $\mV$-categories forms a category denoted by $\vcat{\mV}$, whose morphisms are $\mV$-functors.

Recall that given $\mA$ and $\mB$ two $\mV$-categories, a $\mV$-\bemph{functor} $\freccia{\mA}{F}{\mB}$ consists of a function
\[\freccia{\ob{\mA}}{F}{\ob{\mB}}\]
together with, for every pair $A,B\in \ob{\mA}$, a morphism of $\mV$ 
\[\freccia{\mA(A,B)}{F_{AB}}{\mB(FA,FB)}\]
subject to the compatibility with the composition and with the identities. Again we refer to \cite{BCECT} and Appendix~\ref{sec:appedix} for the details.

%The first We provide now the definition of the category of models $\modelcat{\VCcalcolo}$, whose objects will be 
 %be pairs $(\mV,\mC)$.
As a first step, we can provide a model for $\VCcalcolo$ in terms of an instance of an \emph{enriched category}, by interpreting the host part into a category $\mV$ and the core part into a suitable category $\mC$ enriched in $\mV$.

%\red{Toglierei questa frase: The models of $\VCcalcolo$ are easily defined as pairs $(\mV,\mC)$ and represent the objects of the category $\modelcat{\VCcalcolo}$ we define below.}

\begin{defi}[Models of $\VCcalcolo$]\label{def:models}
A \bemph{model} of $\VCcalcolo$ is a pair $(\mV,\mC)$ where $\mV$ is a cartesian closed category and $\mC$ is a $\mV$-symmetric monoidal category. 
\end{defi}

\textbf{Notation:} For the rest of this section let $\mV$ be a cartesian closed category and let $\mC$ be a $\mV$-symmetric monoidal category.
%Intuitively, we are interpreting the host language in a cartesian closed category $\mV$ and the core language in a category $\mC$ that is enriched in $\mV$. 

%we do not have any adjunction or comonad between the categories that provide denotation to the host and the core components but, as we will see in Section 3.1, just an instance of an enriched category.

\subsection{Structures and Interpretation}\label{sec:struct}

%\blue{Riformulazione per Ref 1: A \bemph{structure} for the language $\VCcalcolo$ in $(\mV,\mC$) is defined as follows. We assume being given objects $\interp{X}\in \ob{\mV}$ to interpret basis $\host$-types and objects $\interp{A}\in \ob{\mC}$ to interpret basis $\core$-types. Other types are recursively defined in Figure \ref{fig:typesem}.}
A \bemph{structure} for the language $\VCcalcolo$ in $(\mV,\mC$) is specified by giving an object $\interp{X}\in \ob{\mV}$ for every base $\host$-type $X$, and an object $\interp{A}\in\ob{\mC}$ for every base $\core$-type $A$. Given these assignments, the other types are interpreted recursively as
presented in Figure \ref{fig:typesem}:
\begin{figure}[H]
\centering
%\small
\renewcommand\arraystretch{2}
\begin{tabular}{ccc}
$\interp{1}=1$&$\interp{X \times Y}=\interp{X}\times \interp{Y}$&$\interp{X\rightarrow Y}=\interp{X}\rightarrow \interp{Y}$\\
$\interp{\Ctype{\ptype}{A}{B}}=\Ctype{\mC}{\interp{A}}{\interp{B}}$&$\interp{A\otimes B}= \interp{A}\otimes \interp{B}$&$\interp{I}=I$
\end{tabular}
\caption{Type interpretation}\label{fig:typesem}
\end{figure}
\textbf{Notation:} given a $\host$-context $\Gamma=[x_1:X,\dots,x_n:X_n]$, we define $\interp{\Gamma}:=\interp{X_1}\times \cdots \times \interp{X_n}$. Similarly given a $\core$-context $\Omega=[a_1:A_1,\dots,a_m:A_m]$ we define $\interp{\Omega}:=\interp{A_1}\otimes \cdots \otimes \interp{A_m}$.

The interpretation of an $\host$-terms in context $\termconstructor{\Gamma}{\host}{s}{X}$
is given by a morphism of $\mV$
\[\freccia{\interp{\Gamma}}{\interp{\termconstructor{\Gamma}{\host}{s}{X}}}{\interp{X}}\]
defined by induction on the structure of terms as usual. This means, for example, that we define
\[\interp{\termconstructor{x_1:X_1,\dots,x_n:X_n}{\host}{x_i}{X_i}}=\pr_i\]
where $\freccia{\interp{X_1}\times \cdots \times \interp{X_n}}{\pr_i}{\interp{X_i}}$ is the i-\textit{th} projection.

The interpretation of a $\core$-term in context
$\terconstrmix{\Gamma}{\Omega}{\core}{f}{A}$
is given by a morphism 
\[\freccia{\interp{\Gamma}}{\interp{\terconstrmix{\Gamma}{\Omega}{\core}{f}{A}}}{\Ctype{\mC}{\interp{\Omega}}{\interp{A}}}\]
defined again by induction on structure of terms.

Then the variable interpretations for $\core$-judgements are given by
\[\interp{\terconstrmix{\Gamma}{a:A}{\core}{a}{A}}:=
\xymatrix{
\interp{\Gamma} \ar[r]^{!} & 1 \ar[r]^-{j_{\interp{A}}} & \mC(\interp{A},\interp{A}).
}\]
Moreover the terms constructed using promotion and dereliction are interpreted as follows: 
%\red{nuove interpretazioni}
\[\interp{\termconstructor{\Gamma}{\host}{\promote{a_1,\dots a_n.f}}{\Ctype{\ptype}{\otimes_{\Omega}}{A}}}:=\interp{\terconstrmix{\Gamma}{\Omega}{\core}{f}{A}}\]

\[\interp{\terconstrmix{\Gamma}{\Omega}{\core}{\derelict{t,f}}{B}}=\xymatrix@-1pc{\interp{\Gamma}\ar[rr]^-{\angbr{\interp{t}}{\interp{f}}}& &\mC(\interp{A},\interp{B})\times\mC(\interp{\Omega},\interp{A})\ar[rr]^-{c_{\interp{\Omega}\interp{A}\interp{B}}} && \mC(\interp{\Omega},\interp{B}).}\]

\begin{defi}
A structure on $(\mV,\mC)$ \bemph{satisfies} a $\host$-equation in context $\termconstructor{\Gamma}{\host}{s=t}{X}$ if 
$\interp{\termconstructor{\Gamma}{\host}{s}{X}}=\interp{\termconstructor{\Gamma}{\host}{t}{X}}$.
Similarly, a structure satisfies a $\core$-equation in context $\terconstrmix{\Gamma}{\Omega}{\core}{g=f}{A}$ if 
$\interp{\terconstrmix{\Gamma}{\Omega}{\core}{g}{A}}=\interp{\terconstrmix{\Gamma}{\Omega}{\core}{f}{A}}$.
\end{defi}
\begin{rem}
\label{remark dualities of prom and der are satisfied}
As a direct consequence of the previous definitions, we have that 
\[\interp{\terconstrmix{\Gamma}{a:A}{\core}{\derelict{t,a}}{B}}=\interp{\termconstructor{\Gamma}{\host}{t}{\Ctype{\ptype}{A}{B}}}\]
because the second component of the arrow  

\[\interp{\terconstrmix{\Gamma}{a:A}{\core}{a}{A}}:=
\xymatrix{
\interp{\Gamma} \ar[r]^{!} & 1 \ar[r]^-{j_{\interp{A}}} & \mC(\interp{A},\interp{A}).
}\]
acts as the identity with respect to the enriched composition.
Therefore, we can conclude that 
\[\interp{\termconstructor{\Gamma}{\host}{\promote{a.\derelict{t,a}}}{\Ctype{\ptype}{A}{B}}}=\interp{\termconstructor{\Gamma}{\host}{t}{\Ctype{\ptype}{A}{B}}}.\]
Similarly, one can check that the other duality equation of promotion and dereliction is satisfied by our enriched models.
\end{rem}

\begin{rem}
Consider two $\host$-judgements $\termconstructor{\Gamma}{\host}{s}{\Ctype{\ptype}{A}{B}}$ and  $\termconstructor{\Gamma}{\host}{t}{\Ctype{\ptype}{B}{C}}$. Then the interpretation of the composition defined in Remark \ref{remark formal composition} 
\[\termconstructor{\Gamma}{\host}{\enrichedcomp{a.s}{t}}{\Ctype{\ptype}{A}{C}} \]
is given by the arrow of $\mV$ obtained by the following composition

\[\interp{\termconstructor{\Gamma}{\host}{\enrichedcomp{a.s}{t}}{\Ctype{\ptype}{A}{C}} }=\xymatrix@-1pc{\interp{\Gamma}\ar[rr]^-{\angbr{\interp{t}}{\interp{s}}}& &\mC(\interp{B},\interp{C})\times\mC(\interp{A},\interp{B})\ar[rr]^-{c_{\interp{A}\interp{B}\interp{C}}} && \mC(\interp{A},\interp{C}).}\]
Notice that in the interpretation the dependency on the  term $\terconstrmix{\Gamma}{a:A}{\core}{a}{A}$ disappears because, by definition, the second component of the arrow  
\[\interp{\terconstrmix{\Gamma}{a:A}{\core}{a}{A}}:=
\xymatrix{
\interp{\Gamma} \ar[r]^{!} & 1 \ar[r]^-{j_{\interp{A}}} & \mC(\interp{A},\interp{A}).
}\]
acts as the identity with respect to the enriched composition.
\end{rem}

\begin{exa}[Breaking the linearity of the core, Part II]

If we want to extend $\core$ to a non-linear core, as described in Example~\ref{ex:linpart1},  we have to force  the tensor product of the enriched category $\mC$ to be a $\mV$-cartesian product. Notice that the new product is a particular case of the $\mV$-tensor product.

\end{exa}

\subsection{Relating Syntax and Semantics (I): Soundness and Completeness}\label{sec:soundcompl}
We prove the usual relationship between syntax and semantics of $\VCcalcolo$ by means of soundness and completeness theorems.

\begin{thm}[Soundness and Completeness]\label{theorem soundness and completeness}
Enriched symmetric, monoidal categories on a cartesian closed category are sound and complete with respect to the  calculus $\VCcalcolo$.
\end{thm}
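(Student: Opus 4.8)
The plan is to prove soundness and completeness in the standard way for a typed calculus with respect to its categorical models, following the blueprint of~\cite{RCSILL}, but adapted to the enriched setting where the host part lives in a cartesian closed category $\mV$ and the core part lives in a $\mV$-enriched symmetric monoidal category $\mC$.

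For \textbf{soundness}, I would proceed by induction on the derivation of an equality judgment. First I would verify that the interpretation of terms is well-defined, i.e.\ that the clauses given in Section~\ref{sec:struct} do assign a morphism $\interp{\Gamma}\to\interp{X}$ (resp.\ $\interp{\Gamma}\to\mC(\interp{\Omega},\interp{A})$) to every derivable judgment; the interesting cases are $(prom)$ and $(der)$, which are interpreted by ``transposing'' along the enrichment, and $(tc)$, $(let1c)$, $(let2c)$, which use the $\mV$-monoidal structure of $\mC$ together with the composition law $c_{ABC}$ and the identity $j_A$. A weakening/substitution lemma is needed first: one shows that the three substitution rules of Figure~\ref{fig:subst} are interpreted by pre- or post-composition (for $(sub1)$ and $(sub2)$ the usual substitution-as-composition in $\mV$; for $(sub3)$ substitution of a core term amounts to composing with $c_{ABC}$ in $\mC$, using the enriched composition). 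Once substitution is handled, each equality rule is checked: the $\beta/\eta$ rules for $\host$ are exactly the cartesian closed equations in $\mV$; the rules $(el1)$--$(el4)$ for the core are the coherence equations of a symmetric monoidal category, read in the $\mV$-enriched sense; and the promote/derelict duality rules of Figure~\ref{dualpromder} hold because $\interp{\promote{-}}$ and $\interp{\derelict{-}}$ are mutually inverse transpositions, so their composites are identities. The remarks on associativity and unit of $\enrichedcomp{-}{-}$ (Remark~\ref{remark associativity of composition}) are validated precisely by the associativity and identity axioms of the enriched composition $c$ and unit $j$.

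For \textbf{completeness}, the standard move is to build a \emph{term model} (syntactic category) out of $\VCcalcolo$ itself and observe it is a model in the sense of Definition~\ref{def:models}. Concretely, I would let $\mV$ be the category whose objects are $\host$-types and whose morphisms $X\to Y$ are provable-equality classes of host terms $\termconstructor{x:X}{\host}{t}{Y}$; this is cartesian closed by the $\times$ and $\rightarrow$ rules together with their $\beta/\eta$ equations. Then I would let $\mC$ be the $\mV$-category whose objects are $\core$-types and whose hom-object $\mC(A,B)$ is the host-type $\Ctype{\ptype}{A}{B}$, with composition $c_{ABC}$ given by the host term $\enrichedcomp{-}{-}$ of Remark~\ref{remark formal composition} and identity $j_A$ given by $\id_A:=\promote{a}$; the enriched category axioms are exactly the equations of Remark~\ref{remark associativity of composition}, and the symmetric monoidal $\mV$-structure on $\mC$ comes from $\otimes$, $I$, and the coherence isomorphisms, whose coherence equations are the $(el\ast)$ rules plus the standard structural isomorphisms definable via promotion/dereliction. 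In this syntactic model the interpretation of a judgment is (the equivalence class of) the term itself, so if two terms have equal interpretations in \emph{every} model they are in particular equal in the term model, hence provably equal — this is completeness.

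The main obstacle, and the step I would spend the most care on, is establishing that the syntactic $\mC$ genuinely forms a \emph{$\mV$-enriched symmetric monoidal category} rather than merely an ordinary one: one must check that all the required structure morphisms (associators, unitors, symmetry for $\otimes$, and the tensor of morphisms) are realized by honest host terms of the appropriate $\ptype(-,-)$ types, that these satisfy the pentagon, triangle and hexagon identities \emph{as equalities of host terms provable in $\VCcalcolo$}, and that the enrichment is compatible with the monoidal structure (the $\mV$-functoriality of $\otimes\colon \mC\times\mC\to\mC$). This is where the asymmetry of the calculus — the fact that dereliction needs a fresh linear variable and that there is no map back from $\host$ into $\core$ — has to be handled delicately, and where the substitution lemma for $(sub3)$ and the duality rules of Figure~\ref{dualpromder} do the real work. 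Everything else (the $\mV$-side being cartesian closed, soundness of the host $\beta/\eta$) is routine and parallels the simply typed $\lambda$-calculus and~\cite{RCSILL}.
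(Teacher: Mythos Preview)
Your proposal is correct and follows essentially the same route as the paper: soundness by induction using the cartesian closed structure of $\mV$ for the host rules, the (enriched) symmetric monoidal coherence of $\mC$ for the core rules, and the mixed rules handled via the $\mV$-functoriality of $\otimes$ and $I\otimes-$; completeness via the syntactic model with $\mV_\theory$ built from $\host$-types and $\mC_\theory$ enriched by setting $\mC_\theory(A,B):=\Ctype{\ptype}{A}{B}$, composition $\enrichedcomp{-}{-}$, and identity $\promote{a}$. Your write-up is in fact more detailed than the paper's own proof sketch, and the ``main obstacle'' you single out (verifying that $\mC_\theory$ is a genuine $\mV$-enriched symmetric monoidal category, not merely ordinary) is exactly the point the paper leaves implicit.
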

\begin{proof}
The soundness for the \emph{pure} $\core$-judgements and \emph{pure} $\host$-judgements is standard, because $\mV$ is cartesian closed, and $\mC$ is symmetric monoidal with hom-sets $\mC_0(A,B)=\mV(I,\mC(A,B))$. What remains to be proven is that the equations in context satisfied by the structure on $(\mV,\mC)$ are closed under the mixed rules, but this follows directly by Remark~\ref{remark dualities of prom and der are satisfied} and by the coherence of the $\mV$-functors $-\otimes -$ and $I\otimes -$. It is a direct generalization of the non-enriched case.

To get completeness, starting from a $\VCcalcolo$-theory $\theory$, we build a cartesian closed category $\mV_{\theory}$ whose objects are $\host$-types and whose morphisms are $\termconstructor{y:Y}{\host}{s}{X}$, both modulo the corresponding equalities. The $\mV_{\theory}$- category $\mC_{\theory}$ is given by the $\core$-objects and the enrichment is given by setting $\mC_{\theory}(A,B):=\Ctype{\ptype}{A}{B}$ and by the mixed judgements together with the corresponding equalities.  
For example the composition morphism
\[\freccia{\mC_{\theory}(B,C)\times \mC_{\theory}(A,B)}{\compfun_{ABC}}{\mC_{\theory}(A,C)}\]
is given by 

{
\begin{prooftree}
%\AxiomC{$\termconstructor{y:\mC(B,C),y: \mC(A,B)}{\mV}{x}{\mC(B,C)\times \mC(A,B)}$}
\AxiomC{$\termconstructor{\Gamma}{\host}{x}{\Ctype{\ptype}{A}{B}}$}
\AxiomC{$\terconstrmix{\Gamma}{a:A}{\core}{a}{A}$}
\BinaryInfC{$\terconstrmix{\Gamma}{a:A}{\core}{\derelict{x,a}}{B}$}
%\AxiomC{$\termconstructor{x:\mC(B,C)\times \mC(A,B)}{\mV}{x}{\mC(B,C)\times \mC(A,B)}$}
\AxiomC{$\termconstructor{\Gamma}{\host}{y}{\Ctype{\ptype}{B}{C}}$}
\BinaryInfC{$\terconstrmix{\Gamma}{a:A}{\core}{\derelict{y,\derelict{x,a}}}{C}$}
\UnaryInfC{$\termconstructor{\Gamma}{\host}{\promote{a.\derelict{y,\derelict{x,a}}}}{\Ctype{\ptype}{A}{C}}$}
\end{prooftree}}
\noindent
(where $\Gamma:= [y:\Ctype{\ptype}{B}{C},x: \Ctype{\ptype}{A}{B}]$) which is exactly what we denote by $\enrichedcomp{a.x}{y}$ in Remark  \ref{remark formal composition}, and it is associative by Remark \ref{remark associativity of composition}. Similarly the identity element
\[\freccia{1}{j_A}{\mC_{\theory}(A,A)}\]
is given by
\begin{prooftree}
\AxiomC{$\terconstrmix{-}{a:A}{\core}{a}{A}$}
\UnaryInfC{$\termconstructor{-}{\host}{\promote{a.a}}{\Ctype{\ptype}{A}{A}}$}
\end{prooftree}
\end{proof}

\subsection{Relating Syntax and Semantics (II): $\VCcalcolo$ is an Internal Language}\label{sec:soundcompl} \mbox{}

In order to prove the stronger result about the correspondence between the category of theories and the category of models, we need to define a suitable notion of morphisms for the latter.

Recall that a \bemph{cartesian closed functor} is a functor preserving finite products and exponents up to isomorphisms, and it is said \bemph{strict} cartesian closed functor if it preserves such structures on the nose.

A key notion which will play a central role in the definition of the category of models is the notion of \emph{change of base}.
This notion explains formally how by changing the host language one can induce a change of the embedded language.
All the definitions and remarks we are going to state work  in a more general context as well, see \cite{BCECT}, but we state them just for our case of interest.

%\begin{defi}[Change of base]\label{def:changebase}
Given two cartesian closed categories $\mV_1$ and $\mV_2$, a cartesian closed functor between them
\[\freccia{\mV_1}{F}{\mV_2}\]
induces a functor 
\[\freccia{\vcat{\mV_1}}{F_{\ast}}{\vcat{\mV_2}}\]
sending a $\mV_1$-category $\mC$ to the $\mV_2$-category $F_{\ast}(\mC)$ such that
\begin{itemize}
\item $F_{\ast}(\mC)$ has the same objects as $\mC$;
\item for every $A$ and $B$ of $\mC$, we define
\[F_{\ast}(\mC)(A,B)=F(\mC(A,B))\]
\item the composition, unit, associator and unitor morphisms in $F_{\ast}(\mC)$ are the images of those of $\mC$ composed with the structure morphisms of the cartesian closed structure on $F$.
\end{itemize}
See \cite[Lemma 3.4.3]{CHT} for all the details.
\begin{rem}
Notice that if $\freccia{\mV_1}{F}{\mV_2}$ and $\freccia{\mV_2}{G}{\mV_3}$ are cartesian closed functors, then the functor $\freccia{\vcat{\mV_1}}{(G\circ F)_{\ast}}{\vcat{\mV_3}}$  coincides with the functor $G_{\ast}\circ F_{\ast}$. In fact, it is direct to check that for every $\mV_1$-category $\mC$, the categories $(G\circ F)_{\ast}(\mC)$ and $G_{\ast}(F_{\ast}(\mC))$ have the same objects and, for every $A$ and $B$ of $\mC$, we have that
\[(G\circ F)_{\ast}(\mC)(A,B)=(G\circ F)( \mC (A,B))= G(F(\mC(A,B))=(G_{\ast}\circ F_{\ast}) (\mC)(A,B). \]
Similarly, it is direct to check that the functor $\freccia{\vcat{\mV}}{(\mathit{Id}_{\mV})_{\ast}}{\vcat{\mV}}$ induced by the identity functor on $\mV$ coincides with the identity functor on $\vcat{\mV}$.
\end{rem}
\begin{defi}[Change of base]\label{def:changebase}
Given a cartesian closed functor $\freccia{\mV_1}{F}{\mV_2}$, the induced functor \freccia{\vcat{\mV_1}}{F_{\ast}}{\vcat{\mV_2}} is called \bemph{change of base}.
\end{defi}
A standard example is the \emph{underlying set functor} $\freccia{\mV}{\mV(I,-)}{\Set}$, sending an object $A$ of $\mV$ to the underlying set of elements $\mV(I,A)$. This lax-monoidal functor induces a functor $\freccia{\vcat{\mV}}{\mV(I,-)_{\ast}}{\Cat}$, see \cite[Proposition 3.5.10]{CHT} or \cite{BCECT}.

\begin{rem}
Observe that if $\freccia{\mV_1}{F}{\mV_2}$ is a strict cartesian closed functor and $\mC$ is a $\mV_1$-symmetric monoidal category, then $F_{\ast}(\mC)$ is a $\mV_2$-symmetric monoidal category. 
\end{rem}

We exploit the equivalence between the category of theories and the category of models to show, from a type theoretical point of view, that the change of base has the following meaning: if one has a translation between two host languages $\host_1$ and $\host_2$ and $\core$ is an embedded language in $\host_1$, then one can use this translation of host languages to \emph{change the host language} for $\core$, obtaining an embedding in $\host_2$.

Now we can define the category of models $\modelcat{\VCcalcolo}$.
\begin{defi}[Category of $\VCcalcolo$ models]
The objects of $\modelcat{\VCcalcolo}$ are models in the sense of Definition \ref{def:models}, and an arrow in $\modelcat{\VCcalcolo}$ between two models $(\mV_1,\mC_1)$ and $(\mV_2,\mC_2)$ is given by a pair $(F,f)$ where $\freccia{\mV_1}{F}{\mV_2}$ is a strict cartesian closed functor, and $\freccia{F_{\ast}(\mC_1)}{f}{\mC_2}$ is a $\mV_2$-functor, preserving strictly all the structures. Given two arrows $\freccia{(\mV_1,\mC_1)}{(F,f)}{(\mV_2,\mC_2)}$ and $\freccia{(\mV_2,\mC_2)}{(G,g)}{(\mV_3,\mC_3)}$ the composition of these is given by the pair $(G,g)(F,F):=(GF,\ovln{gf})$ where
\[ \ovln{gf}:=g G_{\ast}(f).\]
\end{defi}
Observe that the requirement that all the functors and enriched functors must be strict reflects the definition of translation of theories.

We can conclude with the theorem showing the equivalence between the categories of models and that of theories for $\VCcalcolo$.

\begin{thm}\label{theorem internal laguage}
The typed calculus $\VCcalcolo$ provides an internal language for $\modelcat{\VCcalcolo}$, i.e. $\catvcalcolo$ is equivalent to $\modelcat{\VCcalcolo}$.
\end{thm}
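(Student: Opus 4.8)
The plan is to establish the claimed equivalence $\catvcalcolo \equiv \modelcat{\VCcalcolo}$ by exhibiting the two functors $S$ and $L$ sketched above and showing they form an equivalence of categories. First I would define the \emph{syntactic model} functor $\freccia{\catvcalcolo}{S}{\modelcat{\VCcalcolo}}$: on objects, a $\VCcalcolo$-theory $\theory$ is sent to the pair $(\mV_{\theory},\mC_{\theory})$ already constructed in the proof of Theorem~\ref{theorem soundness and completeness}, where $\mV_{\theory}$ is the classifying cartesian closed category whose objects are $\host$-types modulo equality and whose morphisms are host terms in context $\termconstructor{y:Y}{\host}{s}{X}$ modulo provable equality, and $\mC_{\theory}$ is the $\mV_{\theory}$-enriched symmetric monoidal category with $\mC_{\theory}(A,B) := \Ctype{\ptype}{A}{B}$, tensor induced by $\otimes$ on core types, composition given by $\enrichedcomp{-}{-}$ (Remark~\ref{remark formal composition}) and identities by $\id_A := \promote{a}$. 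On morphisms, a $\VCcalcolo$-translation $M\colon \theory_1 \to \theory_2$ is sent to the pair $(S(M)_{\host}, S(M)_{\core})$: since $M$ preserves all host type and term constructors and their equalities, it induces a strict cartesian closed functor $F := S(M)_{\host}\colon \mV_{\theory_1} \to \mV_{\theory_2}$; since $M$ also acts coherently on core types and core terms (respecting the change of base described after Definition~\ref{def category of theories}), it induces a $\mV_{\theory_2}$-functor $S(M)_{\core}\colon F_{\ast}(\mC_{\theory_1}) \to \mC_{\theory_2}$ preserving all structure strictly. Functoriality of $S$ follows from the fact that composition of translations corresponds, component-wise, to the composition $(G,g)(F,f) = (GF, g\,G_{\ast}(f))$ in $\modelcat{\VCcalcolo}$.

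Next I would define the \emph{internal language} functor $\freccia{\modelcat{\VCcalcolo}}{L}{\catvcalcolo}$: given a model $(\mV,\mC)$, let $L(\mV,\mC)$ be the $\VCcalcolo$-theory whose proper host-type symbols are the objects of $\mV$, whose core-type symbols are the objects of $\mC$, whose host-term symbols are the morphisms of $\mV$ (one constant $\termconstructor{}{\host}{\ulcorner u\urcorner}{\interp{X}\rightarrow\interp{Y}}$-style symbol per arrow $u\colon \interp{X}\to\interp{Y}$, or more precisely per arrow out of each finite product of type-objects), whose core-term symbols are the morphisms of $\mC$ viewed as elements $I \to \mC(A,B)$, and whose proper axioms are exactly the type equalities and term equalities that hold in $(\mV,\mC)$ under the canonical structure. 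On morphisms, an arrow $(F,f)\colon (\mV_1,\mC_1) \to (\mV_2,\mC_2)$ is sent to the translation that on host-type symbols acts as $F$ on objects, on core-type symbols acts as $f$ on objects, and on term symbols sends the name of an arrow to the name of its $F$- or $f$-image; strictness of $F$ and $f$ is what guarantees that this assignment preserves the $\VCcalcolo$-type and term constructors on the nose, so that it is a genuine $\VCcalcolo$-translation. Functoriality of $L$ is then immediate from the composition law in $\modelcat{\VCcalcolo}$.

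Then I would verify that $S$ and $L$ form an equivalence by exhibiting natural isomorphisms $\eta\colon \id_{\catvcalcolo} \Rightarrow L\circ S$ and $\varepsilon\colon S\circ L \Rightarrow \id_{\modelcat{\VCcalcolo}}$. For $\varepsilon$: given a model $(\mV,\mC)$, the theory $L(\mV,\mC)$ has one type symbol per object and one term symbol per morphism of $(\mV,\mC)$, so its syntactic model $S(L(\mV,\mC)) = (\mV_{L(\mV,\mC)}, \mC_{L(\mV,\mC)})$ comes with an obvious comparison pair $(E,e)$: $E$ sends the equivalence class of a host term to the morphism it denotes, and $e$ does the same on the core side; because every object of $\mV$ (resp.\ $\mC$) is the denotation of its own type symbol and every morphism is the denotation of its own term symbol, $E$ is (strictly) essentially surjective and full, and soundness together with the fact that the proper axioms of $L(\mV,\mC)$ are \emph{all} the equalities true in $(\mV,\mC)$ make it faithful, hence an isomorphism of categories, and similarly for $e$; naturality in $(F,f)$ is a routine check against the definitions of $S$ and $L$ on arrows. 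For $\eta$: given a theory $\theory$, the theory $L(S(\theory)) = L(\mV_{\theory},\mC_{\theory})$ has type and term symbols naming the equivalence classes of $\theory$'s types and terms; the translation $\eta_{\theory}$ sending each $\theory$-type/term to (the symbol naming) its own equivalence class is a translation, and it is invertible because distinct symbols naming the same class are identified by the axioms of $L(S(\theory))$ while the constructors are matched by construction; one checks $\eta_{\theory}$ is natural in $\theory$ by chasing the definitions. The bookkeeping of a general model $(\mV,\mC)$ rather than one of classifying shape can be absorbed as usual by a syntactic presentation having a symbol for \emph{every} object and morphism; this is the standard internal-language argument, and it is exactly the pattern followed in~\cite{RCSILL}.

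The main obstacle, as the authors themselves flag, is getting the arrow part right: one must check that a $\VCcalcolo$-translation really does decompose into a strict cartesian closed functor on the host side \emph{plus} a $\mV_2$-functor out of $F_{\ast}(\mC_1)$ on the core side, and conversely that every such pair arises from a translation. This hinges on the fact that the type constructor $\Ctype{\ptype}{-}{-}$ and the promotion/dereliction terms are interpreted (Section~\ref{sec:struct}) precisely so that $\interp{\Ctype{\ptype}{A}{B}} = \mC(\interp{A},\interp{B})$ and composition/identities match $c_{ABC}$ and $j_A$; since a translation must preserve $\Ctype{\ptype}{-}{-}$, $\promote{-}$, $\derelict{-}$ and their equational theory (including the duality rules of Figure~\ref{dualpromder}), it is forced to send hom-objects $\mC_{\theory_1}(A,B)$ to $F(\mC_{\theory_1}(A,B))$ compatibly with composition — which is exactly the data of a $\mV_{\theory_2}$-functor $F_{\ast}(\mC_{\theory_1}) \to \mC_{\theory_2}$. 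Verifying that the change-of-base construction $F_{\ast}$ interacts correctly with the strictness requirements on both levels (so that $S$ and $L$ land in the stated categories and respect composition) is the technically delicate point; once it is in place, the equivalence follows the template of~\cite{RCSILL} with only routine modifications for the non-symmetric host/core interaction.
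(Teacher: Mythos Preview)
Your proposal is correct and follows essentially the same approach as the paper: define $S$ via the syntactic model construction of Theorem~\ref{theorem soundness and completeness}, define $L$ by naming objects and morphisms of $(\mV,\mC)$ as new type/term symbols with all validated equalities as axioms, and check that the two composites are naturally isomorphic to the identities. Your treatment is in fact more detailed than the paper's own proof, which dispatches the equivalence with ``it is direct to check''; in particular you make explicit the decomposition of a translation into a strict cartesian closed functor plus a $\mV_2$-functor out of $F_{\ast}(\mC_{\theory_1})$, which is precisely the point the authors flag as the delicate one but do not spell out in the proof itself. One minor discrepancy: the paper adds a core-term symbol for every $\mV$-arrow $\Gamma\to\mC(\Omega,A)$, whereas you phrase core-term symbols as global elements $I\to\mC(A,B)$; since you also add host-term symbols for all $\mV$-morphisms (including those into hom-objects) and dereliction recovers the core terms, the two presentations generate the same theory, so this is only a cosmetic difference.
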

\begin{proof}

We define a functor $\freccia{\catvcalcolo}{S}{\modelcat{\VCcalcolo}}$ by mapping a $\VCcalcolo$-theory to its syntactic category defined as in Theorem \ref{theorem soundness and completeness}. On morphisms, the functor $S$ takes a translation between theories to the pair of functors induced on the syntactic categories by mapping terms and types to their respective translations.

Conversely, the functor $\freccia{\modelcat{\VCcalcolo}}{L}{\catvcalcolo}$ is defined by mapping an object $(\mV,\mC)$ of $\modelcat{\VCcalcolo}$ to the $\VCcalcolo$-theory obtained by extending $\VCcalcolo$ with:

\begin{itemize}
\item new $\core$-types $A$ with the corresponding axiom $\typeconstructor{\core}{A}$ for each element of $\ob{\mC}$, which means that we are naming the objects of $\mC$ in the calculus and we extend the interpretation of the $\core$-types of $\VCcalcolo$ by interpreting the new names with the corresponding objects;
\item new $\host$-types $X$ with the corresponding axiom $\typeconstructor{\host}{X}$ for each element of $\ob{\mV}$ by naming the objects of $\mV$ as we did for $\mC$, and by renaming the types and the axioms induced by objects of the form $\mC(A,B)$ as $\typeconstructor{\host}{\Ctype{\ptype}{A}{B}}$;
\item new $\core$-terms $f$ and $\terconstrmix{\Gamma}{\Omega}{\core}{f}{A}$ for each morphism $\freccia{\Gamma}{f}{\mC(\Omega,A)}$ of $\mV$ having the interpretation of $\Gamma$ as domain and of $\mC(\Omega,A)$ as codomain;
\item new $\host$-terms $s$ and $\termconstructor{\Gamma}{\host}{s}{X}$ for each morphism $\freccia{\Gamma}{s}{X}$ of $\mV$ having the interpretation of $\Gamma$ as domain and of $X$ has codomain;
\item new equality axioms $\typeconstructor{\core}{A=B}$ if the interpretation of $A$ is equal to that of $B$ in $\mC$;
\item  new equality axioms $\typeconstructor{\host}{X=Y}$ if the interpretation of $X$ is equal to that of $Y$ in $\mV$;
\item  new equality axioms between $\core$-terms $\terconstrmix{\Gamma}{\Omega}{\core}{f=g}{A}$ if the interpretation of $\terconstrmix{\Gamma}{\Omega}{\core}{f}{A}$ is equal to that of $\terconstrmix{\Gamma}{\Omega}{\core}{g}{A}$ by interpreting the new term symbols in the morphisms they name;
\item  new equality axioms between $\host$-terms $\termconstructor{\Gamma}{\host}{s=t}{X}$ if the interpretation of $\termconstructor{\Gamma}{\host}{s}{X}$ is equal to that of $\termconstructor{\Gamma}{\host}{t}{X}$ by interpreting the new term symbols as the morphisms they name. 
\end{itemize}
The morphisms of $\modelcat{\VCcalcolo}$ give rise to translations, because both the components of morphisms are strict. It is direct to check that the previous functors define an equivalence of categories.
\end{proof}

\begin{rem}
Observe that functors defined in Theorem \ref{theorem internal laguage} define the isomorphisms $(\mV,\mC) \cong S(L(\mV,\mC))$ and $\theory\cong L(S(\theory))$. 

Some other authors consider only the equivalence $(\mV,\mC) \equiv S(L(\mV,\mC))$, see for example \cite{CTCS}, as the characterizing property of the internal language, however in this case one does not obtain the equivalence between the categories of models and that of theories, but just a sort of bi-equivalence. \commento{questa nota sulla bi-equiv si pu\'o dire un po' meglio? a sort of \`e un po' informale}
\end{rem}
We conclude this section with some concrete examples of $\VCcalcolo$-models.
\begin{exa}\label{ex:concreteex}\mbox{}
\begin{enumerate}
\item Every locally small symmetric monoidal category $\mM$ is by definition $\Set$-enriched, and therefore gives rise to a $\VCcalcolo$-model given by the pair $(\Set,\mM)$.

\item The category of $\cstar$-algebras provides a concrete example of model in the form $(\Set, \mM)$. Recall  that a (unital) $\cstar$-algebra is a vector space over the field of
complex numbers with multiplication, unit and involution. Moreover, it satisfies
associativity and unit laws for multiplication, involution laws and  the spectral radius provides a norm making it
a Banach space~\cite{CAWA,RENNELA2018257,CCQCLLECT}. Finite dimensional $\cstar$-algebras and
completely positive unital linear maps form a symmetric monoidal category denoted by $\fdcstar$. See~\cite{RENNELA2018257,CCQCLLECT}.
Therefore the pair $(\Set,\fdcstar)$ is a model of $\VCcalcolo$.

\item
The pair $(\DCPO_{\bot},\REL)$ provides a model for $\VCcalcolo$, where $\REL$ is the category of sets and relations. Recall that $\DCPO_{\bot}$ is cartesian closed and $\REL$ is enriched in $\DCPO_{\bot}$ by the usual subsets ordering, and the least element of each hom-set is given by the empty relation. This pair is particularly interesting when one plans to move toward quantum languages (with classical control~\cite{Zorzi16}). In Section \ref{sec:travail} we develop this example to show some possible applications of Theorem~\ref{theorem internal laguage}.
\end{enumerate}
\end{exa}
\iffalse
In categorical logic tradition the internal language theorem represents ``a step beyond soundness and completeness'' to relate syntax and semantics in a stricter fashion, \cite{RCSILL}.  
 In the Section~\ref{sec:travail} we show that the notion of internal language  is also central and provides a formal characterization of the host-core programming style.% since it allow us to reason equivalently both on the syntax and on the semantics. 
\fi
\section{The host-core ``internal language'' {in action}} \label{sec:travail}
% and allows us to provide a formal definition of host-core language built out from two arbitrary standalone languages $\lang{1}$ and $\lang{2}$.

%In this section we show that $\VCcalcolo$ provides an internal language for the category of its models $\modelcat{\VCcalcolo}$. To prove this we have to define the category of its theories  $\catvcalcolo$, the category $\modelcat{\VCcalcolo}$ and then to prove the equivalence $\catvcalcolo\equiv\modelcat{\VCcalcolo}$.% of the theories of the language~\cite{DMarxiv}.

%This result can be intuitively read as follows: syntactically, $\VCcalcolo$ is the most expressive language having the defined semantics as a model and, semantically, the model is the smallest one providing a denotation for the calculus.

{The equivalence $\catvcalcolo\equiv\modelcat{\VCcalcolo}$ provided in Theorem \ref{theorem internal laguage} can be read as follow: we can reason equivalently either starting from the syntax or from the semantics. 
Thanks to this strong correspondence, we are able to answer the main questions raised in the introduction.}
\subsection{Answering Q1: A formal definition of Host-Core Languages}\label{sec:hcdefinition}

\begin{defi}\label{def:hostcore}
%Let $\lang{1}$ and $\lang{2}$ be two programming languages and $\mathcal{D}_1$ and $\mathcal{D}_2$ their denotations.
%If it is possible to define a model $\mathcal{D}_3$ as the enrichment of $\mathcal{D}_2$ in $\mathcal{D}_1$, the host-core language $\lang{1}-\lang{2}$ is the internal language of the model $\mathcal{D}_3$.

Let $\lang{1}$ and $\lang{2}$ be two standalone languages with their syntactic denotations $D_1$ and $D_2$. We say that $\lang{2}$ is \emph{embeddable} in $\lang{1}$ if 
the pair $(D_1,D_2)$ belongs to $\modelcat{\VCcalcolo}$. 
\end{defi}
Employing the equivalence $\catvcalcolo\equiv\modelcat{\VCcalcolo}$  of Theorem \ref{theorem internal laguage}, and the functor $\freccia{\modelcat{\VCcalcolo}}{L}{\catvcalcolo}$ defined in the proof of such a theorem, we can conclude that if $\lang{2}$ is embeddable in $\lang{1}$ in the sense of the previous definition, then
%, by the previously cited equivalence $\catvcalcolo\equiv\modelcat{\VCcalcolo}$ 
 there exists a  host-core language $\lang{3}:=L(D_1,D_2)$ such that $\lang{1}$ hosts $\lang{2}$. The requirement that the denotation of $\lang{2}$ is enriched in the denotation of $\lang{1}$ from a syntactic point of view means that there exists a mixed language whose ``host'' part is $\lang{1}$ and whose ``pure'' core part is $\lang{2}$. This coincides with the intuitive notion one could have of embedded style languages.

Notice that this semantic definition of the host-core languages confirms what we explain in the introduction. We are not defining a mere extension of $\lang{1}$, but we want to describe the fact that $\lang{1}$ is able to host $\lang{2}$ and provide the interface between the two languages. This is syntactically characterized through the mixed syntax, and semantically captured by the enrichment.  
\subsection{Answering Q2: Relating the standalone and the host-core style, a first step}\label{sec:equivsahc}
The equivalence $\catvcalcolo\equiv\modelcat{\VCcalcolo}$ provides a useful tool for comparing languages presented with different features and styles, as standalone and host-core languages. Indeed, without these categorical results, it would be hard to find the right criteria to compare such  different languages, while lifting this problem in the categorical settings and using the correspondence between theories and models, a comparison can be easily done just by studying their categorical models.

Consider Benton's presentation \LNL\ of Intuitionistic Linear Logic (\texttt{ILL}). In~\cite{RCSILL} the authors proved that $\modelcat{\LNL}$ is a full subcategory of 
$\modelcat{\texttt{ILL}}$.
This means that a ``mixed grammar'' in the style of \LNL\ is equivalent to another one written in \texttt{ILL} in a ``standalone'' style, in the sense that they have the same models.
%QUESTA LA TOGLIEREI Mapping this idea to programming languages, we are able to express and compare all programs written in  $\lang{2}$ in the syntax of $\lang{1}$.
{Similarly, we are able to compare }the pure linear ({exponential-free}) fragment of $\texttt{ILL}$, called here $\texttt{RLL}$ as in~\cite{RCSILL}, with $\VCcalcolo$. In fact, we can show that $\modelcat{\texttt{RLL}}$ is a subcategory of $\modelcat{\VCcalcolo}$. The objects of $ \modelcat{\texttt{RLL}}$ are small symmetric monoidal categories and then, by definition, they are $Set$-enriched. Each morphism of $\modelcat{\texttt{RLL}}$, which by definition preserves all appropriate structure, can be extended to a morphism of $\modelcat{\VCcalcolo}$. This morphism is 
of the form $({Id},m)$.

\subsection{Answering Q3: From semantics back to syntax, an empirical example}\label{sec:extractionalgo}

The functors $\freccia{\modelcat{\VCcalcolo}}{L}{\catvcalcolo}$ and $\freccia{\catvcalcolo}{S}{\modelcat{\VCcalcolo}}$ defined in the proof of Theorem~\ref{theorem internal laguage} provide a useful tool to define a $\VCcalcolo$-theory $L(\mV,\mC)$ starting from a pair $(\mV,\mC)$, where $\mV$ is a cartesian closed category and $\mC$ is a $\mV$-symmetric monoidal category, and vice versa. Here we focus on the former direction.
In particular, the explicit way in which the $\VCcalcolo$-theory $L(\mV,\mC)$ is defined from a given model $(\mV,\mC)$ suggests that the functor $\freccia{\modelcat{\VCcalcolo}}{L}{\catvcalcolo}$ can be used to reason backwards, from semantics to syntax.
%TOLTO QUESTE 4 RIGHE
%starting from categories of interest, \blue{through logical specification of the proof of the Internal language theorem}, we  obtain a \emph{candidate}, mixed, minimal $\VCcalcolo$-theory and we can use it as a basis for programming language design. 
%Notice that, as a byproduct, we can also get the separate syntaxes of the two communicating languages (denoted by the two input categories).
{ In Example \ref{ex:dcpobrel} we sketch how to extract a basis kernel type theory for a quantum language with classical control starting from a model (two categories of interest). What we obtain can be used as a starting point to fully define the syntax and the well-typing rules. Starting from the semantics one gets a kernel set of syntactical constructors mirroring the denotation.
This seems to be useful in non-classical computations, and in particular in quantum languages, where the definition of suitable semantics is not always trivial at all  (see e.g. ~\cite{CVW19}). }

%This is particularly useful in tricky settings such as quantum computing or in general in non-classical computing, where the definition of suitable semantics is always non-trivial at all (see Section~\ref{sec:related}). In fact, starting from the denotations one could get a kernel set of syntactical constructors that, by construction, will be mirrored in the semantics.
\commento{
\red{Rebuttal---We rewrote the section about the function "SyntaxGen", and we dropped it from the presentation. The previous presentation of this ''application'' of the internal language, which contained the function SyntaxGen, has been correctly framed by the first referee as a ''logical specification of the proof of the Internal language theorem'' and was misleading. We were not providing a procedure, that in general is clearly non-computable. We specified this in the previous version, but we realized the point was still unclear.
In the current version, we simply discuss the idea and directly  provide a refined version of Example \ref{ex:dcpobrel}.  
The idea of studying  the conditions under which our ''application'' of the internal language would become a computable procedure is indubitably interesting but out from the scope of this paper. Moreover, we agree that at this level the specific internal language of a pair (H,C)  is HC itself, i.e. a linear lambda calculus embedded into a simply typed one. Our scope is to show that  our construction, that is an initial proposal and a theoretical account of a simplified notion of multi-language, can be lifted to more complex languages or, from the semantical side, to richer  categories of interest. We apologize if we seemed to promote a "propaganda" about the notion of internal language. We simply highlighted that this notion, frequently used in categorical logic is underexplored in language foundations. We scale back our emphasis  along the paper. Moreover, we weaken question Q2, we realised it was too cocky. }
}

\begin{exa}\label{ex:dcpobrel}
We apply the functor $\freccia{\modelcat{\VCcalcolo}}{L}{\catvcalcolo}$ to the pair of models $(\DCPO_{\bot}, \Rel)$, where $\Rel$ is enriched in $\DCPO_{\bot}$. The pair $(\DCPO_{\bot}, \Rel)$ is of interest since it captures a communication with (the denotation of) a ``classical'' computational system represented by the $\DCPO_{\bot}$ with (the denotation of)  another system having a ``quantum flavour'', since $\Rel$ is a dagger category in the sense of Selinger (see e.g.,~\cite{Selinger07a}).
%We choose this example since $Rel$ is a well known in questo caso possiamo lavorare sint e sem in modo equivalente

The $\VCcalcolo$-theory $L(\DCPO_{\bot}, \Rel)$ is obtained by extending $\VCcalcolo$ with:
\begin{itemize}
\item  a new $\core$-type $A$ for every object $A$ of $\Rel$;

\item a new $\host$-type $X$ for every object $X$ of $\DCPO_{\bot}$; moreover, for each $A,B\in\Rel$, we add a new $\host$-type ${\Ctype{\ptype}{A}{B}}$ which is the set of binary relations between  $A$ and $B$;

%\item \textbf{Proof Types:} Given two sets $A$, $B$ the set of ${\Ctype{\ptype}{A}{B}}$ is the set of \emph{binary relations} between $A$ and $B$.

\item a new $\core$-term $f$ for each morphism $f:\Gamma\rightarrow \Rel(\Omega,A)$ of $\DCPO_{\bot}$;% we introduce the $\core$-term $f$ 
%and its judgement $\terconstrmix{\Gamma}{{\Omega}}{\core}{f}{A}$ 

\item a new $\host$-term $s$ for each morphism $s:\Gamma\rightarrow X$ of $\DCPO_{\bot}$;
%We introduce the $\host$-term $s$.% and its judgement $\termconstructor{\Gamma}{\host}{s}{X}$.
\item new equality axioms $\typeconstructor{\core}{A=B}$ if the sets $A$ and $B$ are equal sets;
\item  new equality axioms $\typeconstructor{\host}{X=Y}$ if $X$ and $Y$ are equal objects of  $\DCPO_{\bot}$;
\item  new equality axioms between $\core$-terms $\terconstrmix{\Gamma}{\Omega}{\core}{f=g}{A}$ if 
$f:\Gamma\rightarrow \Rel(\Omega,A)$ and $g:\Gamma\rightarrow \Rel(\Omega,A)$ are equal morphisms of $\DCPO_{\bot}$;

%the interpretation of $\terconstrmix{\Gamma}{\Omega}{\core}{f}{A}$ is equal to that of $\terconstrmix{\Gamma}{\Omega}{\core}{g}{A}$ by interpreting the new term symbols in the morphisms they name;
\item  new equality axioms between $\host$-terms $\termconstructor{\Gamma}{\host}{s=t}{X}$ if $s:\Gamma\rightarrow X$ is equal to  $t:\Gamma\rightarrow X$ as morphisms of $\DCPO_{\bot}$. 

%\item $\cjudset$ contains  judgements of the form $\terconstrmix{\Gamma}{{\Omega}}{\core}{f}{A}$ for each $f\in\ctermset$ and an equality judgement $\terconstrmix{\Gamma}{\Omega}{\core}{f=g}{A}$ for each pair of core terms $f$ and $g$ having the same denotation.
%
%\item $\hjudset$ contains  judgements of the form $\termconstructor{\Gamma}{\host}{s}{X}$ for each $s\in\htermset$ and an equality judgement $\termconstructor{\Gamma}{\host}{s=t}{X}$ for each pair of host terms $s$ and $t$ having the same denotation.
\end{itemize}
\end{exa}
%Notice that given a core term $f$, the corresponding ``promoted'' host term is defined as $\termconstructor{\Gamma}{\host}{\promote{f}}{\Ctype{\ptype}{\Omega}{A}}$ and 
This example highlights the intuition behind the notion of enrichment. In few words, we can speak about relations between sets in the context of $\DCPO_{\bot}$. This is particularly evident if we consider pure core judgements (i.e., defined in a contex whose host part is empty). 
For each object in $\Rel(A,B)$ we have in $\DCPO_{\bot}$ a function defined from the initial object 1 to $\Rel(A,B)$, i.e. $\termconstructor{}{\host}{\promote{a.f}}{\Ctype{\ptype}{A}{B}}$ 
can be viewed as a constant and we can plainly use constants, that are ``silent programs'' in the host syntax.

Since $\Rel$ is a dagger category,  in this syntax we can explicitly specify the term constructor $\dag$. For example:

\begin{prooftree}
\AxiomC{$\termconstructor{\cdot \;|\; a:A}{\core}{{f}}{B}$}
\UnaryInfC{$\termconstructor{\cdot\;|\; b:B}{\core}{{f}^{\dag}}{A}$}
\end{prooftree}
We can require that the operator satisfies all the expected equations~\cite{Selinger07a} and so model this through syntactical rules.
The real advantage of such a $\VCcalcolo$-theory is that one can equivalently reason about its term judgements both syntactically or semantically, viewing them as a concrete functions. We plan to develop the example above and use the set of types, terms and judgements we extracted as a basis for quantum language design (see our future work, Section~\ref{sec:conclusions}).

%% trovare un linguaggio standalone abbastanza espressivo per ``coprire'' un host-core

\section{Related work, Discussions and Future Work}\label{sec:related}

\subsection{Related Work}

%\commento
{
Benton's   Linear-Non-Linear Logic provides an elegant presentation of Linear Logic~\cite{LLLC},  and in the last years its models have been studied both from a categorical logic and a computer science perspective.

An important contribution in the first direction has been given by Maietti et al. in \cite{RCSILL}, where the authors discuss models and morphisms for Intuitionistic Linear Logic (\texttt{ILL}), Dual Intuitionistic Linear Logic (\texttt{DILL}) and Linear-Non-Linear Logic (\LNL). 
The crucial point is that soundness and completeness theorems are not generally sufficient to identify the most appropriate class of denotational models for a typed calculus, unless the same typed calculus provides an internal language of the category of models we are considering.

In the context of the foundation of programming theory, the language $\ewire$ is studied in~\cite{RENNELA2018257} as the basis of a denotational semantics based on enriched categories. It is built from a simple first-order linear language for circuits embedded in a more powerful host language.
The circuit language is interpreted in a category that is enriched in the category denoting the host part. 
Moreover, some interesting extensions of the host language are proposed. In particular, the authors use the enrichment of the category of W*-algebras in the $\DCPO$-category to accommodate recursive types. This allows them  to model arbitrary types and is directly connected with the possibility of easily encoding \emph{parametric} quantum algorithms.
In~\cite{CCQCLLECT} the authors also show a relation with Benton's Linear-Non-Linear models.

{An inspiring  work both for $\VCcalcolo$ and  $\ewire$ is the \emph{enriched effect calculus} (\texttt{EEC})~\cite{EggerMS14}, whose models are given in terms of enriched categories. The idea behind the semantics in \cite{RENNELA2018257} and in \cite{EggerMS14} is quite similar. Also in \cite{EggerMS14} a deep comparison with \LNL\ models is provided, showing that every \LNL\ model with additives determines an (\texttt{EEC}) model. The authors also prove soundness and completeness of the equational theories, with respect to the interpretation, but they do not consider the notion of internal language.}

Looking for non-classical computation, $\ewire$ has been defined as a generalization of a version of $\qwire$ (\emph{``choir''})~\cite{qwire17a,qwire17b,qwire3}, which is one of the most advanced programming platforms for the encoding and the verification of quantum circuits. 
  
The \emph{circuit language} of $\qwire$ can be treated as  the quantum plugin for the host classical language, currently in the Coq proof assistant~\cite{qwire17b,RobRandThesis}. The  type system is inspired by Benton's (\LNL) Logic and supports both linear and dependent types. % that partitions the exponential data into a purely linear fragment and a purely non-linear fragment connected via a categorical adjunction (notice that this fully reflects also the QRAM architecture). %This choice also makes the ``quantum core'' strongly independent from the host language. 
The circuit language essentially controls the well-formed expressions concerning  \emph{wires}, i.e. circuit inputs/outputs, 
whereas the host language controls higher-order circuit manipulation and evaluation.
% The type system of the circuit language essentially controls the well formation of expressions concerning  \emph{wires}, i.e. circuit's inputs/outputs. 
%whereas 
The host language also controls the boxing mechanism (a circuit can be ``boxed'' and then promoted as a classical resource/code.  %, see Section~\ref{sec:discussions}.
%$\qwire$  also supports features such as  \emph{dynamic} and \emph{static} lifting.
See~\cite{RobRandThesis} for a complete account about the use of advanced operations and techniques designed for $\qwire$. 
Further developments on this line are provided in  \cite{En-lambda}, where the authors introduce the lambda calculus \texttt{ECLNL} for string diagrams, whose primary purpose is to generate complicated diagrams from simpler components. In particular, the language \texttt{ECLNL} adopts the syntax (and operational semantics) of \texttt{Proto-Quipper-M}, while  the categorical model is again given by a \LNL\ model, but endowed with an additional enrichment structure. The  abstract model of \texttt{ECLNL} satisfies the soundness, while  completeness and internal language are not discussed in \cite{En-lambda}. However, notice that \texttt{ECLNL} can be seen as a particular specialization of $\VCcalcolo$, and also its categorical model is a particular case of the model we introduced. This is not surprising, since one of the main motivation for the design of $\VCcalcolo$ claimed at the beginning of our work is that $\VCcalcolo$ has to  embody the principal properties of those languages dealing with host-core situations.

A notion of \emph{multi-language}  related to $\VCcalcolo$  is defined and studied in~\cite{BuroM19,BuroCM20} on the basis of the pioneering investigation~\cite{MatthewsF07}. 
The authors address the problem of providing a formal semantics to the combination of programming languages by  introducing an algebraic framework based on \emph{order-sorted algebras}. The framework provides an abstract syntax (induced by the algebraic structure), that works regardless of the inherent nature of the combined languages. 
While the notion of multi-language they introduced in \cite{BuroM19,BuroCM20} seems  similar to the one we propose in this paper,  motivations, syntactic features and  semantics are deeply different. First, the authors consider two languages combined in a unique standalone language essentially given by the union of the two syntaxes (without a``mixed zone''). Thus,  two languages can be considered at the same level, i.e. there is not a language that has a \emph{privileged position} with respect to the other. This is in contrast with our formulation, which models situations in which the host can delegate some computations to the core. %, which means that the second language is supporting the main one. 
The absence of hierarchy between syntaxes composing the multi-language framework defined in~\cite{BuroM19,BuroCM20}  is reflected also semantically: given two signatures $S_1$ and $S_2$,  the  categorical model of the multi-language obtained from them is given essentially by a pair of algebras $A_1$ and $A_2$, where the first is an algebra over $S_1$, and the second one is an algebra over $S_2$.   No enrichment is required, since, as just said, in the resulting multi-language the set of sorts is obtained by the union of the sets of sorts of $S_1$ and $S_2$. 
Finally, with respect to \cite{BuroM19,BuroCM20} we provide a type theory and a denotational semantics together with the notion of
internal language, which is to the best of our knowledge, new in the context of host-core languages.

%NUOVE CONCLUSIONI
\subsection{Discussions and Future Work}\label{sec:conclusions}

In this paper, we design the host-core calculus $\VCcalcolo$  and provide its denotational semantics in terms of enriched categories.% Staton et Al.'s previous investigations~\cite{}.

%{Our main references come from the categorical logic literature. We also toke inspiration from the state of the art of \emph{embedded} programming (especially from the quantum setting).}

We remark that the point of view we developed in this paper is different  from  the programming language design one,  and  we are aware that  $\VCcalcolo$ ``as is'' can not provide a full theoretical account of host-core programming. %and, in perspective, of multi-language practice. 
 Notwithstanding, we believe our investigation provides the first steps towards a foundation for a particular case of multi-language interaction systems.
We tried to show this throughout the paper, through examples and applications of the notion of internal language.
{We claim that a principled theory, with a pre-existing, extensible library of traditional type-theoretical results, relating the syntax and semantics of the underlying main notions of our host-core system, is the appropriate basis on which to build more complicated, less homogeneous systems, more adapted to the applications at hand.}

A crucial aspect that will be central in any future refinement of $\VCcalcolo$ is the communication between host and core. 
{This fact becomes central when one wants to model a notion of \emph{interoperability}. The issue is out of the scope of our  investigation, {but we quickly comment on the point from a type theory perspective.} }
%\textst{In the current version of $\VCcalcolo$, the communication between  host and core language is limited, but, as displayed in Examples} \textst {in Section, this restricted case allows to make some interesting considerations.}
{The way host $\host$ and core $\core$ communicate with each other is strongly related to the choice one makes to design the host and to the expressive power of $\host$ and $\core$.}

%In the design of $\VCcalcolo$, we establish a hierarchical dependency of the core on the host.
%This is fully reflected by the kind of \emph{communication} we model between $\host$ and $\core$ languages. 
More expressive languages can arise from more complex forms of interaction. At the level of the mixed type theory, a ``realistic'' version of communication able to reflect real language interoperability necessarily requires a notion of \emph{casting} between host and core types.  Some host types $X$ can be  ``isomorphic'' to types of the shape $\Ctype{\ptype}{I}{A}$ of promoted core terms. For example, one could have a boolean type $\boldbool$ and state it is equivalent to a type $\Ctype{\ptype}{I}{\boldbit}$, where $\boldbit$ is the core type of bits.
It is natural to think of adding a constant $\mathbf{cast}:\Ctype{\ptype}{I}{A}\mapsto X$ that allows $\host$ to explicitly read results of the evaluation in $\core$. 
One could consider also a reverse casting implementing a bi-directional notion of interoperability.

\subsubsection{ Future Work}

%%%%%%%%%%%%%%%%%%%
Our investigation leads into (at least) three directions: the improvement of the expressive power of $\VCcalcolo$, a complete operational study of $\VCcalcolo$,  and finally its quantum specialization.

\begin{itemize}
\item We design $\VCcalcolo$ pursuing a notion of compositionality. We consider the system as a kernel calculus both for extensions and specializations.

We aim to use the direct correspondence between syntax and semantics to obtain more refined type theories by adding syntactical rules and (equivalent) denotational properties, without changing the rules of the basic language.
%  $\VCcalcolo$.

Each extension requires the addition of syntactical primitives and, mirroring the syntax, a suitable definition of models $(\mV,\mC)$, where the core category $\mC$ denotes the peculiar features of the paradigm and is enriched in $\mV$. 

\item We plan to  study the operational semantics of $\VCcalcolo$ and its
related safety properties such as Subject Reduction and Progress Theorems, as well as a notion of normal form for the $\VCcalcolo$ computations.
%\textst{In our opinion, this will be the breaking point that will allow us to move our focus towards the definition of a (paradigmatic) host-core functional language.}
The topic is already interesting for the current formulation of the syntax and seems to become challenging if one considers more expressive languages. {We expect that our operational semantics will not be similar to that of $\qwire$ and $\ewire$, since the communication between $\host$ and $\core$ and the evaluation style we designed are different.}
We are also interested in the study of reduction strategies. %\textst{, another task that we are going to address as a medium time goal.}

\item %Among the possible specializations of $\VCcalcolo$, the quantum one, which was our main inspiration in the first phase of this work, is in our opinion the most interesting.
One can build a new quantum  specialization of $\VCcalcolo$. We aim to start from a sufficiently expressive host language $\host$ and a quantum ``tuning'' of a circuit description core in the style of $\core^*$ (Example~\ref{ex:running2}), possibly improving its expressiveness. 
Once we have defined the quantum specialization of $\VCcalcolo$, we plan to compare it with other established and influential languages, such as $\qwire$ and $\ewire$.
%This is a short-medium time goal. 
As a parallel task in this quantum setting, we are interested in further developments of  Example~\ref{ex:dcpobrel}, focusing  on  pairs of enriched categories such as  $(\DCPO_{\bot}, \REL)$. Thanks to the internal language theorem, we hope to reverse the perspective in language design, by extracting new type systems backward from the semantics.

\end{itemize}

%% in general the use of bibtex is encouraged
\bibliographystyle{alpha}
\bibliography{biblio}

\newpage
%%%%%%%%%%%%%%%%%%%%%%%%%%%%%%%%
\appendix
\section{Equational theory of the host languages $\host$}\label{app:evaluation rules}\label{Appendix}
The equations of the host language are the standard one of simply typed lambda calculus: 
\begin{figure}[H]
  \centering
  \renewcommand\arraystretch{3}
  %\small
  \begin{tabular}{c}
  %\AxiomC{$\terconstrmix{\Gamma}{\Omega_1,a:A}{\mC}{f}{B}$}
  %\AxiomC{$\terconstrmix{\Gamma}{\Omega_2}{\mC}{g}{A}$}
  %\BinaryInfC{$\terconstrmix{\Gamma}{\Omega_1,\Omega_2}{\mC}{(\lambda a:A.f)(g)=f[g/a]}{B}$}
  %\DisplayProof \\
  ($\beta$) \AxiomC{$\termconstructor{\Gamma}{\host}{\lambda x:X.s}{X\rightarrow Y}$}
  \AxiomC{$\termconstructor{\Gamma}{\host}{t}{X}$}
  \AxiomC{}
  \TrinaryInfC{$\termconstructor{\Gamma}{\host}{(\lambda x:X.s)t=s[t/x]}{Y}$}
  \DisplayProof \\
  ($\pi_1$)\AxiomC{$\termconstructor{\Gamma}{\host}{\langle s,t\rangle}{X\times Y}$}
  \AxiomC{}
  \BinaryInfC{$\termconstructor{\Gamma}{\host}{{\pi_1\langle s,t\rangle}{}{}=s}{X}$}
  \DisplayProof \\
  ($\pi_2$)\AxiomC{$\termconstructor{\Gamma}{\host}{\langle s,t\rangle}{X\times Y}$}
  \AxiomC{}
  \BinaryInfC{$\termconstructor{\Gamma}{\host}{{\pi_2\langle s,t\rangle}{}{}=t}{Y}$}
  \DisplayProof \\
  ($\eta$)\AxiomC{$\termconstructor{\Gamma}{\host}{\lambda x:X.s}{X\rightarrow Y}$}
  \AxiomC{
  %$\termconstructor{\Gamma}{\host}{x}{X}$
  }
  \AxiomC{}
  \TrinaryInfC{$\termconstructor{\Gamma}{\host}{\lambda x:X.s(x)=s}{Y}$}
  \DisplayProof\\
  ($l.a$)\AxiomC{$\termconstructor{\Gamma}{\host}{s=t}{X}$}
  \AxiomC{$\termconstructor{\Gamma}{\host}{u}{X\rightarrow Y}$}
  \BinaryInfC{$\termconstructor{\Gamma}{\host}{{us=ut}{}{}}{Y}$}
  \DisplayProof \\
  ($r.a$)\AxiomC{$\termconstructor{\Gamma}{\host}{s=t}{X\rightarrow Y}$}
  \AxiomC{$\termconstructor{\Gamma}{\host}{u}{X}$}
  \BinaryInfC{$\termconstructor{\Gamma}{\host}{{su=tu}{}{}}{Y}$}
  \DisplayProof \\
  (in.$\lambda$)\AxiomC{$\termconstructor{\Gamma, x:X}{\host}{s=t}{Y}$}
  \AxiomC{}
  \BinaryInfC{$\termconstructor{\Gamma}{\host}{{\lambda x.s=\lambda x.t}{}{}}{X\rightarrow Y}$}
  \DisplayProof \\
  (refl)\AxiomC{$\termconstructor{\Gamma}{\host}{s}{X}$}
  \AxiomC{}
  \BinaryInfC{$\termconstructor{\Gamma}{\host}{{s=s}{}{}}{X}$}
  \DisplayProof \\
  (sym)\AxiomC{$\termconstructor{\Gamma}{\host}{s=t}{X}$}
  \AxiomC{}
  \BinaryInfC{$\termconstructor{\Gamma}{\host}{{t=s}{}{}}{X}$}
  \DisplayProof \\
  (trans)\AxiomC{$\termconstructor{\Gamma}{\host}{s=t}{X}$}
  \AxiomC{$\termconstructor{\Gamma}{\host}{t=u}{X}$}
  \BinaryInfC{$\termconstructor{\Gamma}{\host}{{s=u}{}{}}{X}$}
  \DisplayProof \\
  \end{tabular}
  \caption{Evaluation rules for the host language $\host$}\label{hostevalrules}
  \end{figure}
  
  For the $\eta$ rule one has the usual constraints that the variable $x$ does not appear free in the term $s$. 

\commento{
\begin{figure}[H]%%%%%%%%%%%%%%%%%%%%%%%%%%%%%%%%%%%%%%%%%%%%%%%%%%%%%%%%%%%%%%%%%%%%%%%%%%%%%%%%%%%%%%%%%%%%%%%%%%%%%%%%%%%%%% \centering
 \scalebox{.9}{ }\\[-1cm]
\begin{scprooftree}{.8}
\hspace*{-1cm}\AxiomC{$\termconstructor{\Gamma_{\!01}}{\host}{x_0}{\Ctype{\ptype}{A_0}{B_0}}$}
\AxiomC{$\terconstrmix{\Gamma_{\!01}}{a_0:A_0}{\core}{{a_0}}{A_0}$}
\BinaryInfC{$\terconstrmix{\Gamma_{\!01}}{a_0:A_0}{\core}{\derelict{x_0,a_0}}{B_0}$}
\AxiomC{$\termconstructor{\Gamma_{\!01}}{\host}{x_1}{\Ctype{\ptype}{A_1}{B_1}}$}
\AxiomC{$\terconstrmix{\Gamma_{\!01}}{a_1:A_1}{\core}{{a_1}}{A_1}$}
\BinaryInfC{$\terconstrmix{\Gamma_{\!01}}{a_1:A_1}{\core}{\derelict{x_1,a_1}}{B_1}$}
\BinaryInfC{$\terconstrmix{\Gamma_{\!01}}{a_0:A_0,a_1:A_1}{\core}{\derelict{x_0, a_0}\otimes \derelict{x_1, a_1}}{B_0\otimes B_1}$}
\UnaryInfC{$\termconstructor{\Gamma_{\!01}}{\host}{\enrichedparallel{a_0.x_0}{a_1.x_1}}{\Ctype{\ptype}{A_0\otimes A_1}{B_0\otimes B_1 }}$}
\UnaryInfC{$\termconstructor{x_0: \Ctype{\ptype}{A_0}{B_0}}{\host}{\lambda x_1: \Ctype{\ptype}{A_1}{B_1}.\enrichedparallel{a_0.x_0}{a_1.x_1}}
                {\arrowtype{\Ctype{\ptype}{A_1}{B_1}}{\Ctype{\ptype}{A_0\otimes A_1}{B_0\otimes B_1 }}}$}
\UnaryInfC{$\termconstructor{}{\host}{\lambda x_0: \Ctype{\ptype}{A_0}{B_0}.\lambda x_1: \Ctype{\ptype}{A_1}{B_1}.\enrichedparallel{a_0.x_0}{a_1.x_1}}
                {\arrowtype{\Ctype{\ptype}{A_0}{B_0}}{\arrowtype{\Ctype{\ptype}{A_1}{B_1}}{\Ctype{\ptype}{A_0\otimes A_1}{B_0\otimes B_1 }}}}$}
\end{scprooftree} %%%%%%%%%%%%%%%%%%%%%%%%%%%%

\caption{Host type derivation for the judgement}
\label{tab:circComp}
\end{figure}%%%%%%%%%%%%%%%%%%%%%%%%%%%%%%%%%%%%%%%%%%%%%%%%%%%
}
%%%%%%%%%%%%%%%%%%%%%%%%%%%%%%%%%%%%%%%%%%%%%%%%%%%%%%%%%%%
\section{Enriched categories}\label{sec:appedix}

We recall here some basic background about monoidal categories and enriched categories. See~\cite{BCECT,A2CC} for a complete account.
%%%%%%%%%%%%%%%%%%%%%Monoidal Categories%%%%%%%%%%%%%%%%%%%%%
\subsection{Monoidal categories}\label{sec:monoidal}
It is often useful to reason in a very abstract sense about processes and how they compose. Category theory provides the tool to do this.

A monoidal category is a category equipped with extra data, describing how objects and morphisms can be combined \emph{in parallel}. The main idea is that we can interpret objects of categories as systems, and morphisms as processes. 

One could interpret this for example, as running computer algorithms in parallel, or from a proof-theoretical point of view, as using separate proofs of $P$ and $Q$ to construct a proof of the conjunction ($P$ and $Q$).

\begin{defi}
A \bemph{monoidal category} $\mV=(\mV_0,\otimes,I,a,l,r)$ consists in giving:
\begin{itemize}
\item a category $\mV_0$;
\item an object $I$ of $\mV_0$, called \bemph{the unit};
\item a bifunctor $\freccia{\mV_0\times \mV_0}{\ox}{\mV_0}$, called \bemph{tensor product}, and we write $A\ox B$ for the image under $\ox$ of the pair $(A,B)$;
\item for every $A,B,C$ objects of $\mV_0$, an \bemph{associativity isomorphism}:
\[ \freccia{(A\ox B)\ox C}{a_{ABC}}{A\ox(B\ox C)}\]
such that $\freccia{((-\ox -)\ox -)}{a}{(-\ox(-\ox -))}$ is a natural isomorphism.
\item for every object $A$, a \bemph{left unit} isomorphism
\[ \freccia{I\ox A}{l_A}{A}\]
such that $\freccia{(I\ox -)}{l}{\id_{\mV_0}}$ is a natural isomorphism;
\item for every object $A$, a \bemph{right unit} isomorphism
\[ \freccia{A\ox I}{r_A}{A}\]
such that $\freccia{(-\ox I)}{r}{\id_{\mV_0}}$ is a natural isomorphism.
\end{itemize}
This data must satisfy the \bemph{pentagon} and \bemph{triangle} equations, for all objects $A,B,C$ and $D$ :
\begin{equation}
 \xymatrix{
((A\ox B)\ox C)\ox D \ar[d]_{a_{ABC\ox \id_D}} \ar[rr]^{a_{(A\ox B)CD}} && (A\ox B)\ox( C\ox D) \ar[dd]^{a_{AB(C\ox B)}}\\
(A\ox(B\ox C))\ox D \ar[d]_{a_{A(B\ox C)D}}\\
A\ox ((B\ox C)\ox D)\ar[rr]_{\id_A\ox a_{BCD}} && A\ox(B\ox(C\ox D))
}
\end{equation}

\begin{equation}
\xymatrix{
(A\ox I)\ox B)\ar[drr]_{r_A\ox \id_B} \ar[rr]^{a_{AIB}} && A\ox (I\ox B)\ar[d]^{\id_A\ox l_B}\\
&& A\ox B
}
\end{equation}
A special kind of example, called a cartesian monoidal category, is given by taking for $\mV_0$ any category with finite products, by taking for $\otimes$ and $I$ the product $\times$ and the terminal
object $1$, and by taking for $a$, $l$, $r$ the canonical isomorphisms. 
\end{defi}

Important particular cases of this are the categories $\Set$, $\Cat$, $\Grp$, $\Ord$, $\Top$ of sets, (small) categories, groupoids, ordered sets, topological spaces.

A collection of non-cartesian examples are $\Ab$, $\Hilb$, $\Rel$ of abelian groups, Hilbert spaces, sets and relations.

\begin{defi}
A monoidal category $\mV$ is said to be \bemph{symmetric} when for every $A,B$ there is an isomorphism 
\[\freccia{A\ox B}{s_{AB}}{B\ox A}\]
such that
\begin{itemize}
\item the morphisms $s_{AB}$ are natural in $A,B$;
\item \bemph{associativity coherence}: for every $A,B,C$ the following diagram commutes:
\[\xymatrix{
(A\ox B)\ox  C \ar[rr]^{s_{AB}\ox \id_C} \ar[d]_{a_{ABC}} &&(B\ox A)\ox C \ar[d]^{a_{BAC}}\\
A\ox(B\ox C) \ar[d]_{s_{A(B\ox C)}} && B\ox(A\ox C) \ar[d]^{\id_B\ox s_{AC}}\\
(B\ox C)\ox A \ar[rr]_{a_{BCA}} &&  B\ox(C\ox A)
}\]
\item \bemph{unit coherence}: for every $A$ the following diagram commutes:
\[\xymatrix{
A\ox I\ar[drr]_{r_A} \ar[rr]^{s_{AI}} && I\ox A \ar[d]^{l_A}\\
&& A
}\]
\item \bemph{symmetric axiom}: for every $A,B$ the following diagram commutes:
\[ \xymatrix{
A\ox B \ar[drr]_{\id_{A\ox B}}\ar[rr]^{s_{AB}} && B\ox A \ar[d]^{s_{BA}}\\
&& A\ox B.
}\]
\end{itemize}
\end{defi}
 %%%%per ora no :-)
%\blue{aggiungere monoidal functor e qualche esempio o calcolo grafico?}

\subsection{Enriched categories}\label{sec:background}

In this section we provide some {basic} notions about  enriched categories. For details see. e.g.  \cite{ASGLMC,BCECT,TGTC}. 
%Oversimplifying, an enriched category generalizes the idea of category by replacing \emph{hom-sets}  with objects in another, arbitrary category $V$ (typically, a general monoidal category).

An enriched category is a category in which the hom-functors take their values not in $\Set$, but in some other category $\mV$. 
The theory of enriched categories is now very well developed in category theory, see \cite{BCECT} and \cite{A2CC}, and recently it finds interesting applications in theoretical computer science, see \cite{CCQCLLECT} and \cite{En-lambda}.
For the rest of this section, let $\mV$ be a fixed monoidal category $\mV=(\mV_0,\otimes,I,a,l,r)$, where $\mV_0$ is a category, $\otimes$ is the tensor product, $I$ is the \emph{unit} object of $\mV_0$, $a$ defines the associativity isomorphism and $l$ and $r$ define the left and right unit isomorphism respectively.  

\begin{defi}[Enriched Category]\label{def enriched category}
A $\mV$-\bemph{category} $\mA$ consists of a class $\ob A$ of \bemph{objects}, a \bemph{hom-object} $\mA(A,B)$ of $\mV_0$ for each pair of objects of $\mA$, and
\begin{itemize}
\item \bemph{composition law}
$\freccia{\mA(B,C)\otimes \mA(A,B)}{c_{ABC}}{\mA(A,C)}$
for each triple of objects;
\item \bemph{identity element} 
$\freccia{I}{j_A}{\mA(A,A)}$
for each object subject to
the associativity and unit axioms expressed by the commutativity of the following diagrams

{\scriptsize
\[\xymatrix@+1pc{
(\mA(C,D)\otimes \mA(C,B))\otimes \mA(A,B) \ar[d]^{c_{BCD}\otimes \id}\ar[rr]^{a}&& \mA(C,D)\otimes( \mA(C,B)\otimes \mA(A,B))\ar[d]^{\id\otimes c_{ABC}}\\
\mA(D,B)\otimes \mA(A,B)\ar[r]^{c_{ABD}} &\mA(A,D)& \mA(C,D)\otimes \mA(A,C)\ar[l]^{c_{ACD}}
}\]
\[\xymatrix@+1pc{
\mA(B,B)\otimes \mA(A,B) \ar[r]^{\;\;\;\;\;\;c_{ABB}} & \mA(A,B) & \mA(A,B)\otimes \mA(A,A) \ar[l]^{c_{AAB}\;\;\;\;\;\;}\\
I\otimes \mA(A,B) \ar[u]^{j_B\otimes \id}\ar[ru]^{l}&&\mA(A,B)\otimes I \ar[u]^{\id\otimes j_A} \ar[ul]^{r}.
}\]
}
\end{itemize}
\end{defi}

Taking $\mV=\Set, \Cat, \mathbf{2}, \Ab$ one can re-find the classical notions of (locally small ) ordinary category, 2-category, pre-ordered set, additive category.

\begin{defi}[$\mV$-functor]\label{def:vfunctor}
Let $\mA$ and $\mB$ be $\mV$-categories. A $\mV$-\bemph{functor} $\freccia{\mA}{F}{\mB}$ consists of a function
\[\freccia{\ob{\mA}}{F}{\ob{\mB}}\]
together with, for every pair $A,B\in \ob{\mA}$, a morphism of $\mV$ 
\[\freccia{\mA(A,B)}{F_{AB}}{\mB(FA,FB)}\]
subject to the compatibility with the composition and with the identities expressed by the commutativity of
{\scriptsize
$\xymatrix@+1pc{
\mA(B,C)\otimes \mA(A,B)\ar[d]^{F_{BC}\otimes F_{AB}} \ar[r]^-{c} & \mA(A,C) \ar[d]^{F_{AC}}\\
\mB(FB,FC)\otimes \mB(FA,FB) \ar[r]^{c} & \mB(FA,FC)
}$
}
and
{\scriptsize
$\xymatrix{
& \mA(A,A)\ar[dd]^{F_{AA}}\\
I \ar[ru]^{j_A} \ar[rd]^{j_{FA}}\\
& \mB(FA,FA).
}$
}
\end{defi}

\begin{defi}[$\mV$-Natural Transformations]
Let $\freccia{\mA}{F,G}{\mB}$ be $\mV$-functors. A $\mV$-\bemph{natural transformation} $\freccia{F}{\alpha}{G}$ is an $\ob{\mA}$-indexed family of \bemph{components} 
\[\freccia{I}{\alpha_A}{\mB(FA,GA)}\]
satisfying the $\mV$-naturality condition expressed by the commutativity of the following diagram
{\scriptsize
\[\xymatrix{
 & I \otimes \mA(A,B) \ar[rr]^-{\alpha_B\otimes F_{AB}}&& \mB(FB,GB)\otimes \mB(FA,FB)\ar[dr]^c\\
 \mA(A,B) \ar[dr]^{r^{-1}} \ar[ru]^{l^{-1}} &&&& \mB(FA,GB).\\
 & \mA(A,B)\otimes I \ar[rr]^-{G_{AB}\otimes \alpha_A} && \mB(GA,GB)\otimes \mB(FA,GA) \ar[ru]^c
}
\] 
}

\end{defi}

\end{document}